
\newif\ifdcol
\documentclass[letterpaper,11pt]{article} \dcolfalse
 
\usepackage{ltexpprt} 
\usepackage{amssymb,amsmath,amsfonts}
\usepackage{epsfig}
\usepackage{bbold}
\usepackage{cite}
\usepackage{hyperref}
\usepackage{stmaryrd}
\usepackage[margin=1in,top=1in,bottom=1in]{geometry}
\usepackage{graphicx}

\newcommand{\bitm}{\begin{itemize}}
\newcommand{\eitm}{\end{itemize}}
\newcommand{\be}{\begin{equation}}
\newcommand{\ee}{\end{equation}}
\newcommand{\bea}{\begin{eqnarray}}
\newcommand{\eea}{\end{eqnarray}}
\newcommand\ba{\renewcommand{\arraystretch}{.7} \left[ \begin{array}{*{6}{@{\hspace{2pt}}r@{\hspace{1.5pt}}}}}
\def\ea{\end{array}\right]}
\def\nn{\nonumber\\}
\newcommand{\bfi}{\begin{figure}}
\newcommand{\efi}{\end{figure}}

\newcommand{\mat}[1]{\begin{bf} #1 \end{bf}}
\newcommand{\fn}{}

\DeclareMathOperator{\boxdim}{{boxdim}}

\DeclareMathOperator{\lboxdim}{\underline{boxdim}}

\DeclareMathOperator{\vol}{vol}
\allowdisplaybreaks

\setlength{\topsep}{3pt} 
\setlength{\itemsep}{0pt}

\def\qed  {{
\parfillskip=0pt 
\widowpenalty=10000 
\displaywidowpenalty=10000   
\finalhyphendemerits=0   
\leavevmode 
\unskip 
\nobreak 
\hfil 
\penalty50 
\hskip.2em 
\null 
\hfill 
$\square$
\par}} 

\usepackage{fancyheadings}
  \pagestyle{fancy}   
  \thispagestyle{fancy}
  
  \lhead{}
  \rhead{}
 
\begin{document}

\ifdcol
\else
\setcounter{page}{0}
\fi

\setlength{\abovedisplayskip}{4pt}
\setlength{\belowdisplayskip}{4pt}

\title{\Large Two Embedding Theorems for Data with Equivalences under Finite Group Action}
\author{Fabian Lim\thanks{F. Lim recieved support from NSF Grant ECCS-1128226.} 
\\Research Laboratory of Electronics, MIT, 
Cambridge, MA 02139, USA\\
flim@mit.edu
}
\date{}

\maketitle

\begin{abstract} 
There is recent interest in compressing data sets for non-sequential settings, where lack of obvious orderings on their data space, require notions of data equivalences to be considered. 
For example, Varshney \& Goyal (DCC, 2006) considered multiset equivalences, while Choi \& Szpankowski (IEEE Trans. IT, 2012) considered isomorphic equivalences in graphs.  
Here equivalences are considered under a relatively broad framework -
finite-dimensional, non-sequential data spaces with equivalences under group action,
for which analogues of two well-studied embedding theorems are derived: the Whitney embedding theorem and the Johnson-Lindenstrauss lemma.
Only the canonical data points need to be carefully embedded, each such point representing a set of data points equivalent under group action.
Two-step embeddings are considered.
First, a group invariant is applied to account for equivalences, and then secondly, a linear embedding takes it down to low-dimensions.
Our results require hypotheses on discriminability of the applied invariant, such notions related to
\emph{seperating invariants} (Dufresne, 2008), and \emph{completeness} in pattern recognition (Kakarala, 1992).

Our first theorem shows that almost all such two-step embeddings can one-to-one embed the canonical part of a bounded, discriminable set 
of data points, if embedding dimension exceeds $2 k$
whereby $k$ is the box-counting dimension of 
the set closure of canonical data points.
Our second theorem shows for $k$ equal to the number of canonical points of a finite data set, 
a randomly sampled two-step embedding, 
preserves isometries (of the canonical part) up to factors $1 \pm \epsilon$ with probability at least $1 - \beta$, if the embedding dimension exceeds $(2\log k + \log(1/\beta))/\alpha(\epsilon,\delta)$ for some function $\alpha$, and $\delta$ is a positive constant capturing a certain discriminability property of the invariant.
In the second theorem, the value $k$ is tied only to the canonical part, which may be significantly smaller than the ambient data dimension, up to a factor equal to the size the group.

\end{abstract}

\ifdcol
\else
\newpage
\fi
 
\renewcommand{\a}{\mat{a}}
\newcommand{\Norm}[2]{|| #1 ||_{{#2}}}
\newcommand{\Real}{\mathbb{R}}
\newcommand{\xbk}{\overline{\mat{a}}_{k}}

\newcommand{\Sens}{\Phi}
\renewcommand{\S}{\mathcal{S}}
\newcommand{\matt}[1]{\mat{#1}}
\newcommand{\x}{a}

\newcommand{\col}{\pmb{\phi}}
\newcommand{\y}{b}
\newcommand{\eps}{\epsilon}
\newcommand{\RIC}{\delta}
\renewcommand{\L}{\Real}

\section{Introduction}

A discrete finite sequence is 
arguably the most \emph{generic} mathematical representation for finite-dimensional data.
However, of recent interest are data sets where it is 
unclear how to appropriately assign sequence orderings to the data space.
For example, ranking data lives on a space of index subsets, which has no meaningful ordering~\cite{Diaconis}.
Graphical data lives on a space of graph edges, 
and node labellings may be often irrelevant~\cite{Kondor2008,Choi}.
\emph{Quotient spaces} that describe matrix manifolds, \textit{e.g.}, the \emph{Grassman manifold}, 
have equivalence classes as elements~\cite{Absil}. 

\begin{figure}
	\centering
 		\includegraphics[width=1\linewidth]{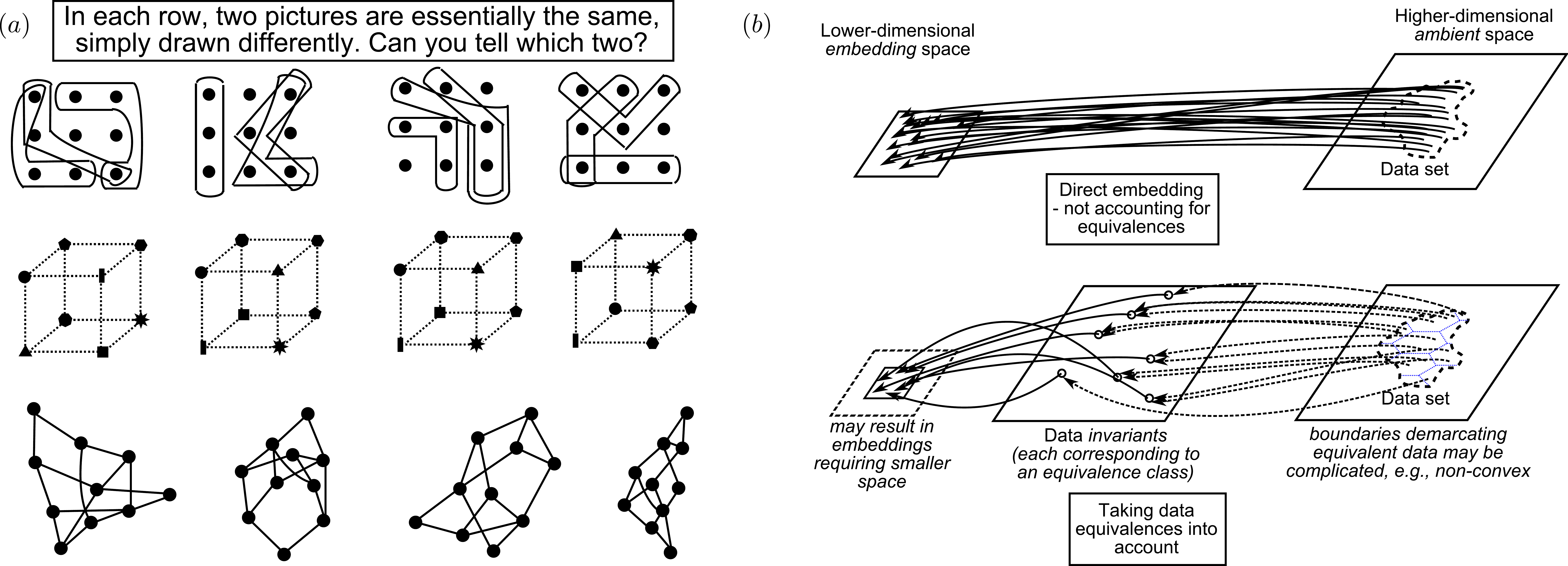}
	\caption{In $(a)$, an exercise illustrating equivalences between three types of ``non-conventional'' data (for answers see below). In $(b)$, accounting for data equivalences while performing embeddings.}
	\label{fig:Examples}
\end{figure}

{
\let\thefootnote\relax\footnotetext{\rotatebox{180}{First row : two-three. Second row : one-four. Third row : one-two.}}
}

We refer to such data sets as \emph{non-sequential}, emphasizing the lack of ordering on their data space. 
For such sets, data compression becomes challenging.
This is because we need to identify which seemingly
different data points actually convey the same information.
This is illustrated in Figure \ref{fig:Examples}$(a)$, whereby 
in each row, two (and only two) pictures are essentially the same (equivalent) but portrayed to appear different.  
Can you tell which two? 
The first row is designed to be an easy example, however the second row requires more time, and the third row is probably too difficult by human eye.
These examples are not arbitrary, in fact they correspond to three previously studied ``non-conventional'' data models - the choice model~\cite{Farias}, the \emph{Ehrenfest} diffusion model (see~\cite{Tullio2008}, p. 5), and the graphical model (see~\cite{Choi, Kondor2008,Kondor2009}).

In this paper we extend low-dimensional \emph{linear embedding} techniques~\cite{Rudel,Achlioptas2003,Baraniuk2008,Candes2004,Blumensath2009}, to the above mentioned non-sequential data models - 
more specifically, to finite-dimensional spaces where data equivalences result from a finite group action.
We consider a two-step embedding process, illustrated in Figure \ref{fig:Examples}$(b)$.
In the first step, we utilize a special function which produces the same output if two data sets are equivalent (under this group action); such a function, termed an \emph{invariant}, accounts for data equivalence. 
Note however that the converse may not always hold, \textit{i.e.}, two data sets producing the same output may not always be equivalent, such converses are related to \emph{separating invariants}~\cite{Dufresne}, and \emph{completeness} in pattern recognition~\cite{Kakarala,Kondor2008,Kondor2009}.
In the second step, a linear embedding is applied on the output of step one, to move the data to the low-dimensional space.  
The interest here is to obtain embedding guarantees, to support the use of such techniques as a kind of compression scheme.
This has to be done with hypotheses on the discriminative power of the applied invariant, as an appropriate one-to-one embedding is not possible if the converse does not hold for any two data points of interest.

\newcommand{\LX}{\Real[\X]}
\newcommand{\G}{\mathcal{G}}
\newcommand{\X}{\mathcal{X}}
\newcommand{\V}{\mathcal{V}}

\textbf{Main results:} 
We extend two embedding theorems to finite-dimensional, non-sequential data spaces $\LX$, discussed here for the case where the group $\G$ acts by 
\emph{permutation} action.
Let $\mathcal{R}$ denote a subset of $\LX$, that contains canonical data points in $\LX$, canonical under equivalence by action of $\G$.
Then for a {bounded} set $\V$ of data points (possibly \emph{infinite}), 
assuming that the subset $\V_{\mathcal{R}}$  of canonical points ($\V$ ``projected'' onto $\mathcal{R}$),
are discriminable
by the invariant (\textit{i.e.}, satisfies the converse property), 
our extension (Theorem \ref{thm:Sauer}) of the Whitney's embedding theorem shows that  
almost all such two-step embeddings can one-to-one embed 
$\V_{\mathcal{R}}$,
if the embedding dimension exceeds $2 k$ whereby $k$ is the box-counting dimension of canonical points in set closure of $\V_{\mathcal{R}}$.
For a \emph{finite} set $\V$ of data points, 
our extension (Theorem \ref{thm:JL}) of the Johnson-Lindenstrauss lemma shows that a randomly sampled linear embedding, 
preserves isometries up to factors $1 \pm \epsilon$ with probability at least $1 - \beta$, if the embedding dimension exceeds $(2\log k + \log(1/\beta))/\alpha(\epsilon,\delta)$ for some function $\alpha$, and $\delta$ is a positive constant that upper limits a to-be-defined undiscriminable fraction, between any two canonical points in $\V_{\mathcal{R}}$.
In the second theorem, the value $k$ measuring the size of the set $\V_{\mathcal{R}}$ of canonical points, may be much smaller than that of the whole set $\V$, up to a factor $\#\G$ in group size. 
All proofs are simple and require little knowledge of invariant theory, facilitated by making obvious linear properties of invariants over a tensor space. 

\textbf{Significance of this work:} 
This is a preliminary report, on potential techniques for database compression of non-sequential data, \textit{e.g.}, DNA fragments, chemical molecular compositions, web-graph connections, record of intervallic events, etc.
Here the models to admit any type of finite group (permutation) action - more general than specific cases considered in~\cite{Lav,Choi}.
Extensions to any matrix group action seems feasible - to be pursued in future work.
A synergistic relationship is developed between linear embeddings and (data) invariants, whereby 
this work can be viewed as an adaptation of invariants for low-dimensional data in high-dimensional ambient spaces.
Provable guarantees are provided on the required \emph{storage complexity} (embedding dimension), tied directly to the size of the data set. The invariant used in the second embedding step does not determine this complexity; it only needs to satisfy the discrimability hypothesis.
While probabilistic data models are typically used in past related works~\cite{Lav,Reznik,Choi}, they are not required here.
We discuss invariants with polynomial-time \emph{computational complexity}, being at most $m n^\omega$ where $m$ is embedding dimension, $n$ is data-dimension of the model used, and $\omega \geq 1$. 
Compare with representation theoretic transform-type invariants 
(see~\cite{Kakarala,Kondor2008,Kondor2009}), where these methods require
complexity of at least $\mathcal{O}((\#\G)^2)$ to execute the fast transforms, a potentially large number if the group size $\#\G$ is huge ($\#\G$ may even be super-{exponential} in $n$ for permutation groups, see~\cite{Kondor2008}, ch. 3 \& 7).

\textbf{More discussion on related prior work:} 
Non-sequential data sets have been of interest for some years now, in pattern recognition~\cite{Kakarala}, 
probability theory~\cite{Diaconis}, machine learning~\cite{Diaconis,Huang,Kondor2008}, optimization~\cite{Absil,Chandrasekaran2010a}, choice models~\cite{Farias}, etc.
Our interest in linear embeddings is due to the wealth of recent interest on this topic, \textit{e.g.}, \emph{compressed sensing}~\cite{Candes2004}.
For invariant functions, the key area is invariant theory~\cite{Cox,Dufresne}, though there exists other guises, \textit{e.g.},
convex graphical invariants~\cite{Chandrasekaran2010a}, triple-correlation~\cite{Kakarala,Kondor2008}, see also survey article~\cite{Wood1996}.
One of their main applications of invariant theory is classification, and characterization of discriminative ability 
is of recent focus, 
see Dufresne's Ph.D thesis~\cite{Dufresne}.
For finite groups, a key result is that the set of all canonical points is in bijection with an \emph{affine algebraic variety} corresponding to the \emph{ideal of relations}, see~\cite{Cox}, pp. 345-353; however the best known complexity bound is super-exponential in the number of data-dimensions $n$.
For triple-correlation and equivalences under compact groups, Kakarala in his Ph.D thesis characterized the discriminative ability under certain conditions ~\cite{Kakarala}.
Kakarala uses representation theoretic techniques known as Tannaka-Krein duality.
The difficulty in obtaining computationally efficient invariants with absolute discriminative ability,
is appreciated by observing that even for the specific class of graphical invariants, 
a polynomial-time algorithm for \emph{graph isomorphism} is still unknown for general graphs.

The work~\cite{Lav} is mainly an information theoretic study, for an efficient algorithm specialized for multisets see~\cite{Reznik}.
In~\cite{Choi} a very efficient $\mathcal{O}(\ell^2)$ algorithm specialized for compressing $\ell$-node graphs is given, though their algorithm cannot be used as a graphical invariant.
In both~\cite{Lav,Choi}, the dimension required for appropriate compression, is similar to that of our Johnson-Lindenstrauss lemma (Theorem \ref{thm:JL}) - there will be savings logarithmic $\log(\#\G)$ in group size. 
For representation theoretic methods, partial labellings of graphical data is considered in~\cite{Kondor2009}.

\renewcommand{\fn}{\footnote{
Kakarala's formulation of homogeneous spaces is different than that of Kondor (see Supplementary Material \ref{ssect:triple}).
Kondor points out that Kakarala's definition, in some cases, ``do not model real-world problems as well''. We tend to agree.}}

For triple-correlations, Kakarala's proof in~\cite{Kakarala} is non-constructive, so an algorithm to invert an invariant function does not exist in general. 
However, invariant theory shows that the set of canonical points have a \emph{manifold}, or \emph{algebraic variety}, structure. Thus a possible future direction - inspired by compressed sensing - is to consider \emph{manifold optimization} techniques (\textit{e.g.},~\cite{Absil}) to perform inversion. 
In pattern recognition, correlation-type invariants are usually treated disparately from invariant theory, however they are related to polynomial functions from an invariant ring.
However, do note that correlation invariants restrict to only \emph{transitive} permutation group actions (where we say the data space is \emph{homogeneous}). 
Also as Kondor pointed out~\cite{Kondor2008}, pp. 89-90, one needs to take care of Kakarala's notion of  
homogeneous spaces\fn.

\newcommand{\Integers}{\mathbb{Z}}
\newcommand{\E}{\mathbb{E}}

\newcommand{\g}{g}

\renewcommand{\x}{x}

\textbf{Organization:} 
Section \ref{sect:back} touches on preliminaries, developing the type of invariants used in this work.
Section \ref{sect:Embed} states the main results, on Whitney embedding (Subsection \ref{ssect:Whit}) and Johnson-Lindenstrauss (Subsection \ref{ssect:JL}). 
Technical proofs are provided in Section \ref{sect:proofs}.

\textbf{Supplementary Material (SM-I \& SM-II):} 
For the sake of most readers who will not be familiar with both invariant theory, and representation theoretic analyses of correlation functions, two sets of supplementary materials are provided at the very end of this manuscript.
Results from both these topics, alluded to throughout this text, are summarized in these materials.

\section{Preliminaries} \label{sect:back}

\subsection{Finite-dimensional data $\G$-spaces:} \label{ssect:hom}

We assume some basic familiarity with \emph{group theory}. 
Let $\G$ denote a group, where 
$h$ and $\g$ denote group elements. 
Let $\X$ denote a set of a finite number $n$ of elements, 
and $x$ denotes an element of $\X$.
Define a \emph{permutation} action of group $\G$ on the set $\X$, where $\g(\x)$ is the image of $x$ under $g$, \textit{i.e.}, $g(x) \in \X$.
This is a \emph{left action}, \textit{i.e.}, for $h,g\in\G$ we have $(hg)(x)= h(g(x))$.
A set $\X$ endowed with such an action of $\G$ is called a \textbf{$\G$-space}.

\newcommand{\LeftAct}[2]{ {#1}^{#2} }
\newcommand{\size}[1]{\##1}
Let $\Real$ denote the set of real numbers.
Let
$\LX$ denote a set of \emph{real-valued} $n$-dimensional vectors, indexed over the set $\X$.
Data points lie in this set.
For $\a \in \LX$, the element of $\a$ indexed by $x$ is written as $a_x$ for all $x\in\X$.
The space $\LX$ (and therefore also the data) inherit the group action.
If $\LeftAct{\a}{\g}$ denotes the image of $\a$ under $\g$, \textit{i.e.}, $\LeftAct{\a}{\g} \in \LX$, then we have 
$(\LeftAct{\a}{\g})_{\g(\x)}=a_\x$ for any $\x \in \X$. 
By the left action of $\G$ on $\X$ given above, it follows that $\LeftAct{\a}{h\g} = \LeftAct{(\LeftAct{\a}{\g})}{h}$.
While $\LX$ can be identified with $\Real^n$, the 
notation $\LX$ emphasizes the group action. 
We illustrate using the following examples. 
Let $e$ denote the group identity element of $\G$, and let $\size{\X}$ be the cardinality of $\X$.

\begin{Example}\textbf{[Periodic data]:}
Let $\X = \{1,2,\cdots, n\}$.
Let $\G$ denote the $n$-th order cyclic group, \textit{i.e.}, $\G = \{e, g,g^2, \cdots, g^{n-1}\}$, 
whereby $\G$ acts on $\X$ as follows: for the special element $g$, we have $g(i) = i + 1$ for $ 1\leq i < n$, and $g(n) = 1$. 
This action is transitive. 
\end{Example}

\newcommand{\Sym}[1]{\mbox{Sym}_{#1}}
\begin{Example}\textbf{[Choice \& graphical data]:}
Let $\X$ be the set of size-$\omega$ subsets of $\{1,2,\cdots,\ell\}$, where the size $\size{\X} = {\ell \choose \omega}$.
Let $\Sym{\ell}$ be the \textbf{symmetric group} (or the group of all permutations) on $\ell$ letters.
Consider the group action of $\Sym{\ell}$ on $\X$, where for any $\g \in \Sym{\ell}$, we have the image 
$g(\mathcal{V}) = \{g(i) : i \in \mathcal{V}\}$ for any $\mathcal{V}\in \X$. 
This action is transitive. 
The special case $\omega=2$ corresponds to graphical data, as any graph is defined by the specification of ${\ell \choose 2}$ edges. 
\end{Example}

More generally, one would let $\G$ act on $\Real^n$ as a \emph{matrix group} - as in invariant theory~\cite{Cox,Dufresne}.
For simplicity, we focus only on permutation groups, which in fact covers all data models that apply for triple-correlation invariants~\cite{Kakarala,Kondor2008,Kondor2009}.

\newcommand{\LG}{\Real[\G]}
\newcommand{\at}{\mat{z}}
\newcommand{\ate}{z}
\newcommand{\A}[1]{\mathcal{A}\bI{#1}}
\newcommand{\bI}[1]{^{(#1)}}

\renewcommand{\V}{\mathcal{V}}
\newcommand{\rep}{\rho}
\newcommand{\brak}[1]{\left\langle #1\right\rangle}

\subsection{$\G$-invariants with certain linearity properties:} \label{ssect:tensor}

\newcommand{\w}{\omega}
\newcommand{\Rt}[1]{\mathcal{T}^{#1}(\LX)}
\newcommand{\tpow}{{\times \w}}
\newcommand{\opow}{{\otimes \w}}
\newcommand{\barr}{\llbracket b_\xw \rrbracket}
\newcommand{\xw}{{\mat{x}\bI{1:\w}}}
\newcommand{\xww}{{\mat{x}\bI{1:\w+1}}}
\newcommand{\Rw}{\Real[\X^\tpow]}
\newcommand{\lb}{\llbracket}
\newcommand{\rb}{\rrbracket}

We provide bare minimal background on invariant theory. 
Those familiar with this material may find our presentation unconventional, as the material is discussed in the way that we feel best supports the exposition of our main results.

We build a tensor space using the vector space $\LX$.
For $\w\geq 1$,
let $\X^\tpow$ denote the product set $\X \times \cdots \times \X$ between $\w$ copies of $\X$.
Then an \textbf{$\w$-array}, denoted $\barr$, has $n^\w$ components $b_{\xw}$ indexed over $\X^\tpow$, \textit{i.e.}, $\mat{x}\bI{1:\w} \in \X^\tpow$, 
where $\mat{x}\bI{1:\w}$ denotes the $\w$-tuple $(x\bI{1},\cdots, x\bI{\w})$.
Let $\Rw$ 
denote the set of all $\w$-arrays over $\X^\tpow$.
 
\newcommand{\e}{\mat{e}}
\renewcommand{\fn}{\footnote{If $\e_1, \e_2, \cdots, \e_n$ is a basis of $\LX$, then the $n^\w$ tensors $\e_{\sigma(1)}\otimes\e_{\sigma(2)}\otimes\cdots\otimes\e_{\sigma(\w)}$, for all $\sigma \in \Sym{\w}$, consists a basis for the tensor space $\Rw$, see~\cite{Comon}.}}
The tensor (outer) product between two elements $\a, \a'$ in $\LX$, denoted
$\a \otimes \a' $, equals $ (a_x \cdot a_y' )_{x,y \in \X}$.
Multiple tensor products, denoted $\a\bI{1} \otimes \cdots \otimes \a\bI{\w}$ for $\a\bI{j} \in \LX$, $1\leq j\leq \w$, follow similarly.
Now $\a\bI{1} \otimes \cdots \otimes \a\bI{\w} \in \Rw$, y considering the $\w$-array $\lb a_{x\bI{1}} \cdots a_{x\bI{\w}} \rb$. 
In fact $\L[\X^\tpow]$ is isomorphic to the 
space obtained by taking tensor products (between vector spaces) of $\w$ copies of $\LX$, see~\cite{Comon}.
For this reason $\Rw$ is called a \textbf{tensor space}, where 
the dimension\fn~of $\LX$ equals $n^\w$. 
For any $\a \in \LX$ , we denote $\a^\opow$  to mean $\a\otimes \cdots \otimes \a$ with $\w$ copies of $\a$.

\newcommand{\Orb}{\Omega}

We now explain how the tensor space $\Rw$ admits invariants.
Firstly, $\X^\tpow$ inherits the group action of $\G$ on $\X$, where the image $g(\xw)$ of $\xw$ under $g$ equals
$( g(x\bI{1}),\cdots, g(x\bI{\w}))$. 
This obtains an action of $\G$ on $\Rw$, where 
for any $\barr \in \Rw$, the image $g(\barr)$ under $g$ 
equals the $\w$-array $\lb b_{g^{-1}(\xw)}\rb$ (meaning that its the $\xw$-th component of the image equals $b_{g^{-1}(\xw)}$).
The previous action of $\G$ on $\X^\tpow$ induces an \emph{equivalence relation} on $\X^\tpow$, whereby $\mat{x}\bI{1:\w}_1, \mat{x}\bI{1:\w}_2 \in \X^\tpow$ are equivalent if there exists some $\g$ in $\G$ that sends $g(\mat{x}\bI{1:\w}_1) = \mat{x}\bI{1:\w}_2$, see~\cite{Tullio2008}. The equivalence classes here are called \textbf{$\G$-orbits} (on $\X^\tpow$), denoted $\Orb_{\G}(\X^\tpow)$. 
Each $\G$-orbit $\Orb_{\G}(\X^\tpow)$ will be associated with a $\w$-array $\barr$, as follows
\begin{align}
     b_\xw = 
   \begin{cases}
   		1 & \mbox{ if } \xw \in \Orb_{\G}(\X^\tpow), \\
   		0 & \mbox{ otherwise }. \\
   \end{cases} \label{eqn:orb}
\end{align}
Finally thinking of $\Rw$ as $\Real^{(n^\w)}$, define an \textbf{inner product} as
\begin{align}
	\brak{\lb a_\xw \rb, \barr} 
	= \sum_{\xw \in \X^\tpow}  a_\xw
	\cdot  b_\xw,  \label{eqn:inner}
\end{align}
and we can construct a \textbf{$\G$-invariant}, a function whose output is invariant under action of $\G$. 

\begin{proposition} \label{pro:GInv}
Let $\G$ be a finite group, with permutation action on data space $\X$.
For some $\G$-orbit $\Orb_{\G}(\X^\tpow)$ on $\X^\tpow$, where $\w \geq 1$, 
let $f_{\Orb_{\G}(\X^\tpow)} : \Rw \rightarrow \Real$ denote the mapping
\begin{align}
f_{\Orb_{\G}(\X^\tpow)} : \lb a_\xw \rb \mapsto \brak{\lb a_\xw \rb, \barr} \label{eqn:f}
\end{align}
where $\barr$ is associated with $\Orb_{\G}(\X^\tpow)$ as in \eqref{eqn:orb}.
Then $f_{\Orb_{\G}(\X^\tpow)}$ is a $\G$-invariant, \textit{i.e.}, for any $\lb a_\xw \rb\in \Rw$ we have
$f_{\Orb_{\G}(\X^\tpow)}(g(\lb a_\xw \rb)) = f_{\Orb_{\G}(\X^\tpow)}(\lb a_\xw \rb)$ for all $g\in \G$.
\end{proposition}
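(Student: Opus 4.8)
The plan is to use the two structural features of the map $f_{\Orb_{\G}(\X^\tpow)}$ in \eqref{eqn:f}: that it is the inner product \eqref{eqn:inner} against the single fixed array $\barr$, and that $\barr$, being the indicator of exactly one $\G$-orbit, is itself fixed by the action of $\G$ on $\Rw$. Once these two facts are in place the statement is immediate, so almost all the work is in recording them correctly.

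First I would unwind the definitions: by \eqref{eqn:f}, \eqref{eqn:inner}, and the description of the $\G$-action on $\Rw$ (under which the $\xw$-component of $g(\lb a_\xw \rb)$ equals $a_{g^{-1}(\xw)}$), one has $f_{\Orb_{\G}(\X^\tpow)}\bigl(g(\lb a_\xw \rb)\bigr) = \brak{g(\lb a_\xw \rb),\barr} = \sum_{\xw \in \X^\tpow} a_{g^{-1}(\xw)}\, b_\xw$. Second, since $g$ acts on $\X^\tpow$ as a bijection, I would reindex this sum by replacing $\xw$ with $g(\xw)$, which leaves the summation range $\X^\tpow$ unchanged and turns the expression into $\sum_{\xw \in \X^\tpow} a_{\xw}\, b_{g(\xw)}$. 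Third --- the only step using anything particular about $\barr$ --- I would observe that a $\G$-orbit is an equivalence class of the relation induced by $\G$ on $\X^\tpow$, hence invariant as a set, so that $\xw \in \Orb_{\G}(\X^\tpow)$ if and only if $g(\xw) \in \Orb_{\G}(\X^\tpow)$ for every $g \in \G$. By \eqref{eqn:orb} this says exactly that $b_{g(\xw)} = b_\xw$ for all $\xw$, so the last sum equals $\sum_{\xw \in \X^\tpow} a_\xw\, b_\xw = \brak{\lb a_\xw \rb,\barr} = f_{\Orb_{\G}(\X^\tpow)}(\lb a_\xw \rb)$, which is the claim.

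I do not expect a genuine obstacle here; the proof is a one-line change of variables once it is recorded that orbit-indicator arrays are $\G$-fixed. The only points deserving care are bookkeeping the inverse in the definition of the $\G$-action on $\Rw$ (so that the reindexing is by $g$, not $g^{-1}$), and the harmless remark that $\w \geq 1$ serves only to keep $\X^\tpow$ and hence $\Rw$ nonempty. I would also note, for use in later sections, that the same computation gives $\brak{g(\mat{c}),\mat{d}} = \brak{\mat{c},g^{-1}(\mat{d})}$ for arbitrary arrays $\mat{c},\mat{d} \in \Rw$; consequently pairing against \emph{any} $\G$-fixed array yields a $\G$-invariant, and invariants formed as linear combinations of the $f_{\Orb_{\G}(\X^\tpow)}$ inherit the property by linearity of \eqref{eqn:inner}.
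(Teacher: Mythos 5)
Your proof is correct and follows essentially the same route as the paper's: a change of variables in the inner-product sum combined with the observation that a $\G$-orbit is setwise invariant under the action of any $g\in\G$, so the indicator array $\barr$ is $\G$-fixed. The only difference is cosmetic — the paper justifies $g^{-1}(\Orb_{\G}(\X^\tpow))=\Orb_{\G}(\X^\tpow)$ by a detour through the conjugate group $g\G g^{-1}$, whereas you read it off directly from orbits being equivalence classes, which is if anything cleaner.
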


\begin{proof}
For brevity, write $\Orb = \Orb(\X^\tpow)$. Let $g \in \G$.
By the earlier definition of the image of $\lb a_\xw \rb$ under $g$, the value 
$f_{\Orb}(g(\lb a_\xw \rb))$ is computed by summing the coefficients $a_\xw$ supported over a subset $\V$, of the form 
$\V = \{g^{-1}(\xw) : \xw \in \Orb\}$. 
Since $\Orb$ is a $\G$-orbit, we may verify that
$\V$ is a $(g \G g^{-1})$-orbit of $\X^\tpow$, here
$g \G g^{-1}$ is a group, $g \G g^{-1} = \{g \sigma g^{-1} : \sigma \in \G\}$. 
But $g \G g^{-1}$ is an automorphism of the group $\G$, 
hence $\V=\Orb$ and we conclude the result. \qed
\end{proof}

\newcommand{\F}{\mathcal{F}}

\newcommand{\permC}{\theta_{\G,\X}}
It is important to note that the $\G$-invariant \eqref{eqn:f} is \emph{linear} in its domain $\X^\tpow$.
We extend these invariants to obtain the following linear $\G$-invariant 
$\F_\w : \Rw \rightarrow \Real^{\kappa_\w}$ of main interest, by setting 
\begin{align}
\F_\w : \lb a_\xw \rb &\mapsto (z_1,z_2, \cdots, z_{\kappa_\w}), \label{eqn:F} \\
z_i &= {\size{(\Orb_{\G,i})}}^{-\frac{1}{2}}  \cdot f_{\Orb_{\G,i}}(\lb a_\xw \rb), \nonumber
\end{align}
where $\kappa_\w$ denotes the number of different $\G$-orbits on $\X^\tpow$, numbered as $\Orb_{\G,1},\cdots, \Orb_{\G,\kappa_\w}$, 
and $\w \geq 1$.
We propose to use \eqref{eqn:F} in the first embedding step (recall illustration Figure \ref{fig:Examples}$(b)$). 
\begin{algorithm}{\textbf{$\G$-invariant \eqref{eqn:F} and embedding step one}} \label{alg:1}
\bitm \itemsep0pt
\item[1)] for given data point $\a \in \LX$, take the $\w$-th tensor power $\a^\opow$.
\item[2)] output the length-$\kappa_\w$ vector $\F_\w(\a^\opow)$.
\eitm
\end{algorithm}

In the upcoming Section \ref{sect:Embed}, the linearity of $\F_\w$ will be exploited to connect with linear embedding theory.
The normalization factor $\size{(\Orb_{\G,i})}^{-\frac{1}{2}}$ w.r.t. orbit cardinality in \eqref{eqn:F} is so that $\F_\w$ will have unity operator norm (to ensure stability).

But before going on to discussing embeddings, we clarify some properties of the invariants. 
Firstly, $\F_\w$ has \textbf{polynomial complexity} of evaluation (in $n$ for fixed $\w$), exactly $n^\w$.
Next, the number of $\G$-orbits $\kappa_\w$ over $\X^\tpow$ determines the (dimension of the) range of $\F_\w$,
and we call $\kappa_\w$ the \textbf{invariant dimension}.
We briefly discuss how to determine $\kappa_\w$.
Let $\permC : \G \rightarrow \Real$, 
that satisfies 
\begin{align}
\permC(\g) = \size{\{  x \in \X : g(x)=x \}} \label{eqn:permC}
\end{align}
for all $\g \in \G$, \textit{i.e.}, the value $\permC(\g)$ equals the number of points in $\X$ \emph{fixed} by the permutation $g$ in $\G$. 
The classical \textbf{Burnside lemma}, see \textit{e.g.}~\cite{Tullio2008}, p. 106, allows us to determine $\kappa_\w$ as follows 
\begin{align}
\kappa_\w = \frac{1}{\size{\G}} \sum_{g\in\G} \left( \theta_{\G,\X}(\g) \right)^\w. \label{eqn:permC2}
\end{align}
Note $\permC(e) = \size{\X} = n$ for the identity element $e$.

\begin{Example}\textbf{[Periodic data]:} \label{eg:ccyc2}
If $\G$ equals the cyclic group on $n$ letters, \textit{i.e.}, then 
$\permC(g) = 0$ for all  $g \neq e$. Since $\size{\G} = \size{\X} =n $, thus $\kappa_\w = n^{\w -1}$.
\end{Example}

\newcommand{\sig}{\sigma}
To simplify calculation of \eqref{eqn:permC}, 
one may use the fact that for any $g \in \G$, $\permC(\sig g \sig^{-1}) = \permC(g)$ for all $\sig \in \G$, see the following example.
There exists an equivalence relation on elements in $\G$, if we deem $h$ equivalent with $g$ if 
$h = \sig g \sig^{-1}$ for some $\sig \in \G$, see~\cite{Tullio2008}, p. 81. 

\begin{Example}\textbf{[Graphical data]:} 
For $\G = \Sym{\ell}$ with some integer $\ell$, by the above relation there exists a bijection between equivalence classes, and the unordered partitions of $\ell$, see~\cite{Tullio2008}, ch. 10.
For example, we can express $\ell = 3$ as $1+1+1$, $2+1$, and $3$; in the first partition three 1's appear, in the second partition one 1 appears and one 2 appears.
One can use this bijection to show that
$\permC(g)= \{\# \mbox{ of 2's appearing}\} + {\{\# \mbox{ of 1's appearing}\} \choose 2}$
for the partition corresponding to $g$.
\end{Example}

\begin{Remark}\label{rem:1}
In invariant theoretic terms, the $\G$-invariant $\F_\w$ is equivalent to a generating set of the degree-$\w$ homogeneous polynomials in the invariant ring, see supplementary material \ref{sm:invar1}.
Due to interest in applying invariants for classification, there is recent focus on studying minimal sets of invariants that discriminate between all data points, 
\textit{i.e.}, any $\a_1, \a_2 \in \LX$ are never mistaken if $\a_1 \neq \LeftAct{\a_2}{g}$ for all $g\in\G$, see~\cite{Dufresne} (Theorem \ref{thm:Dufresne}). 
Unfortunately such powerful discriminability properties come at super-exponential complexity (Fact \ref{fact:1}).
Thus, it is meaningful to ask, for a given invariant $\F_\w$, what are the pairs of data points that it cannot discriminate. 
For $\F_\w$, this amounts to looking at an {affine algebraic variety}, see supplementary material \ref{sm:invar2}. 
In particular for $\G$-spaces with transitive action, we can view $\F_\w$ as a multi-correlation function (see \ref{ssect:triple}), and relate to 
completeness results for the triple-correlation~\cite{Kakarala,Kondor2008,Kondor2009} (see \ref{app:Kaka}).
\end{Remark}

\section{Two Theorems on Low-Dimensional Linear Embeddings of Data-Invariants} \label{sect:Embed}
\newcommand{\att}{\a}
\subsection{Two-step linear embedding (Figure \ref{fig:Examples}$(b)$):}

\renewcommand{\fn}{\footnote{Even with normalization here, the domain of $\F_\w$ of interest is still the set of all $\w$-th tensor powers, \textit{i.e.}, $\{\a^\opow : \a \in \LX \} = \{\tilde{\a}^\opow: \a \in \LX \}$, where $\att = \a/ \Norm{\a}{2}^{\w-1}$.}}

\newcommand{\fnn}{\footnote{If $\F_\w (\a^\opow) = \F_\w ((\LeftAct{\a}{g})^\opow)$, then normalization on both sides by $c = \Norm{\a}{2}^{\w-1}= \Norm{\LeftAct{\a}{g}}{2}^{\w-1}$ preserves the equality.}} 

For some $\w\geq 1$, first apply a $\G$-invariant in Algorithm \ref{alg:1} to place the data (some $\a \in \LX$) in $\kappa_\w$ dimensions.
Next, use a linear map $\Sens : \Real^{\kappa_\w} \rightarrow \Real^m$ to effect the {dimension reduction}, whereby $m < \min(\kappa_\w,n)$. 
Specifically, compute
\begin{align}
  \Phi \left( \F_\w\left( \a^\opow\right) \right),   
  \label{eqn:hash}
\end{align}
where for convenience $ \Phi \F_\w$ will stand for the concatenation of the map $\F_\w$ followed by the map $\Sens$. 
Clearly $ \Phi \F_\w$ is a $\G$-invariant, linear in the domain $\L[\X^\tpow]$, and drops dimensions down to $m$.

We desire embeddings that map the data set, some  $\V \subset \LX$, onto the lower dimensional space in some injective manner.
This is possibly only when the embedding dimension $m$ 
is sufficiently large enough to accommodate the data set. 
The key here is that $m$ can be much smaller than 
the ambient data dimension $n$, where $m$ should really only be tied to the size of $\V$. 
Linear embeddings have been studied for when $\V$ is a union of  subspaces~\cite{Candes2004,Lu2008,Blumensath2009}, and a smooth manifold~\cite{Baraniuk2006,Clarkson2008,Absil}.
Here we look at the case where $\V$ comes from a finite-dimensional, non-sequential $\G$-spaces for finite groups $\G$.
We derive analogues of two well-known embedding theorems, in this two-step setting that employs $\G$-invariants, 
for both the Whitney embedding theorem (Subsection \ref{ssect:Whit}) and
the Johnson-Lindenstrauss lemma (Subsection \ref{ssect:JL}).

\subsection{How many dimensions are needed to embed non-sequential data?} \label{ssect:Whit}

\newcommand{\cl}[1]{\overline{#1}}
\newcommand{\bN}[1]{N_{\eps}(#1)}

In Whitney embedding we consider $\V$ to be a \emph{bounded} subset of $\LX$. 
The size of a bounded subset $\V$, will be measured by the \emph{box-counting dimension}. 
For a bounded subset $\V$, we define: i) the \emph{closure} $\cl{\V}$, and ii) the minimal number $\bN{\V}$ of boxes with sides of length $\eps$ (in $\LX$) required to cover $\V$, in a grid.
The \textbf{box-counting dimension} is then defined as 
\bea
    \boxdim(\V) =\mathop{\lim}_{\eps \rightarrow 0 } \frac{\log \bN{\V} }{-\log \eps} 
\eea
if the limit exists. 
Roughly speaking, if $\boxdim(\V)=d$, then $\bN{\V} \approx \eps^{-d}$. 
The \textbf{lower box-counting dimension}, denoted $\lboxdim(\V)$, is defined regardless by replacing the limit by $\lim \inf$.

\newcommand{\Split}[1]{\mathcal{R}}

\renewcommand{\fn}{\footnote{Since if $\Split{1}$ satisfies these conditions, then $\{\LeftAct{\a}{g} : \a \in \Split{1} \}$ for any $g \in \G$ also satisfies.}}
From our two-step embedding \eqref{eqn:hash}, the map $\Sens \F_\w$ cannot produce a 
one-to-one embedding for $\V$, since the linear tensor invariant $\F_\w$ is not always one-to-one on $\w$-th tensor powers of $\V$. 
On the other hand, we do not care to discriminate between equivalent data points.
Thus to state what is an appropriate or desirable embedding, we 
first define a canonical notion of elements in $\LX$, of which we only discriminate between. 
To this end, define the following 
disjoint subsets
of $\LX$.
For $\a \in \LX$, we say $\a$ is \emph{un-fixable} if $\LeftAct{\a}{\g} \neq \a$ is satisfied \emph{for all} $g\in \G$. 
Let $\Split{1}$ denote an open set in $\LX$.
Let $\Split{1}$ satisfy the following 3 properties: i) \emph{all} elements of $\Split{1}$ are un-fixable, ii) the $\size{\G}$ subsets $\{\LeftAct{\a}{g} : \a \in \Split{1} \}$, one for each $\g \in \G$, are \emph{disjoint}, and iii) the union $\cup_{g\in\G }\{\LeftAct{\a}{g} : \a \in \Split{1} \}$ contains \emph{all} un-fixable elements in $\LX$.
There are exactly $\size{\G}$ disjoint\fn~open sets in $\LX$ that satisfy the above properties. 
We call these open sets \textbf{fundamental regions}, and any one of them will give us our required canonical notion.
For $\V \subset \LX$, a set of canonical elements can be $\{\LeftAct{\a}{g}\in {\Split{1}} : \a \in \V, g \in \G\}$, which we denote by $\V_{\Split{1}}$ for brevity.
Our hypothesis on discriminability is now stated formally: 
a $\G$-invariant is said to be \textbf{discriminable} over a subset $\V$, if this function is one-to-one over $\V_\Split{1}$ where $\Split{1}$ is any fundamental region (note that this definition does not depend on the choice of $\Split{1}$).

The following theorem is a analogue of Theorem 2.2.~\cite{Sauer1991}, for two-step linear embeddings \eqref{eqn:hash} over finite dimensional $\G$-spaces. 

\newcommand{\Region}{\overline{\Split{1}}}
\newcommand{\Map}{\varrho}

\begin{theorem} \label{thm:Sauer}
Let $\G$ be a finite group.
Let $\X$ be a finite dimensional $\G$-space.
For some $\w \geq 1$, let $\F_\w$ be the $\G$-invariant in \eqref{eqn:F}.
Let $\Split{1}$ be any fundamental region. 

Let $\V$ denote the data set, $\V \subset \LX$, and assume $\V$ is bounded.
Assume $\F_\w$ is discriminable over $\V$, 
and let $k= \boxdim(\cl{\V_\Split{1}})$, where we assume this limit $k$ exists.

Let $\Sens$ be a linear map, that drops dimension from $\kappa_\w$ to $m$.
Then if $m > 2k$, then almost all such linear maps, the concatenated map $\Sens\F_\w$ will be 
discriminable over $\V$.
\end{theorem}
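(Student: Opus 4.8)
The plan is to strip off the fixed first step and reduce the claim to the classical prevalent embedding theorem of Sauer--Yorke--Casdagli; the only point requiring care is that the tensor-power map hidden inside $\F_\w$ is nonlinear, so one must check it does not inflate box-counting dimension.

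First I would package the first embedding step as a single map $\psi:\LX\to\Real^{\kappa_\w}$, namely $\psi = \F_\w\circ(\cdot)^\opow$ where $(\cdot)^\opow$ is the $\w$-th tensor-power map. Since $(\cdot)^\opow$ is polynomial (each component a degree-$\w$ monomial in the entries) and $\F_\w$ is linear, $\psi$ is a polynomial, hence $C^\infty$, map. By the standing hypothesis that $\F_\w$ is discriminable over $\V$, the restriction $\psi|_{\V_{\Split{1}}}$ is injective; and, unwinding the definition of discriminability for the composite, ``$\Sens\F_\w$ is discriminable over $\V$'' says exactly that $\Sens\circ\psi$ is injective on $\V_{\Split{1}}$. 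So it suffices to show that for almost every linear $\Sens:\Real^{\kappa_\w}\to\Real^m$, the map $\Sens$ is one-to-one on the image set $\mathcal{W}:=\psi(\V_{\Split{1}})$.

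Second I would bound the box-counting dimension of $\cl{\mathcal{W}}$. Because $\V$ is bounded and the permutation action of $\G$ is by orthogonal (coordinate-permuting) matrices, $\V_{\Split{1}}$ is bounded, so $\cl{\V_{\Split{1}}}$ is compact. On this compact set $\psi$ is Lipschitz (being $C^1$), and a Lipschitz image has box-counting dimension no larger than that of its source, so $\uboxdim\big(\psi(\cl{\V_{\Split{1}}})\big)\le\boxdim(\cl{\V_{\Split{1}}})=k$. By continuity of $\psi$ and compactness of $\cl{\V_{\Split{1}}}$, the set $\psi(\cl{\V_{\Split{1}}})$ is compact, hence closed, and it contains $\mathcal{W}$ and therefore contains $\cl{\mathcal{W}}$ as well; thus $\cl{\mathcal{W}}$ is a compact subset of $\Real^{\kappa_\w}$ with $\uboxdim(\cl{\mathcal{W}})\le k$.

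Finally I would apply Theorem 2.2 of \cite{Sauer1991} to the compact set $\cl{\mathcal{W}}\subset\Real^{\kappa_\w}$: since $m>2k\ge 2\,\uboxdim(\cl{\mathcal{W}})$, for almost every linear $\Sens:\Real^{\kappa_\w}\to\Real^m$ (prevalence here coinciding, on this finite-dimensional space of maps, with Lebesgue-almost-everywhere) the map $\Sens$ is one-to-one on $\cl{\mathcal{W}}$, hence a fortiori on $\mathcal{W}$. For any such $\Sens$ the composite $\Sens\circ\psi$ restricted to $\V_{\Split{1}}$ is the composition of the injection $\psi|_{\V_{\Split{1}}}$ with a map injective on its image, so it is injective on $\V_{\Split{1}}$; that is, $\Sens\F_\w$ is discriminable over $\V$. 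I expect the main obstacle to be the second step: checking that routing through the nonlinear tensor power cannot push the box-counting dimension above $k$, and that the closures are compatible (the image of the compact closure is compact and swallows the closure of the image), so that the hypotheses of \cite{Sauer1991} are genuinely met. It is also worth noting that one only obtains injectivity of $\psi$ on $\V_{\Split{1}}$ itself, not on $\cl{\V_{\Split{1}}}$, which is exactly why the conclusion concerns $\V_{\Split{1}}$ and not its closure.
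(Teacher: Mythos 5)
Your argument is correct, and it reaches the conclusion by a genuinely different decomposition than the paper's. The paper never forms the image set $\mathcal{W}=\psi(\V_{\Split{1}})\subset\Real^{\kappa_\w}$; instead it works entirely in the tensor space $\L[\X^\tpow]$, forms the difference set $\{\a_1^\opow-\a_2^\opow\}$ of tensor powers (whose closure has box dimension at most $2k$), and invokes the parameterized-family version of the prevalence argument (Lemma 4.3 of Sauer et al., reproduced as Lemma \ref{lem:45}), with the Lipschitz family $\rho_{i+m(j-1)}=\size{(\Orb_{\G,j})}^{-1/2}f_{\Orb_{\G,j}}(\cdot)\,\e_i$ and $\pmb{\beta}$ identified with the entries of $\Sens$. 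There, discriminability enters as the pointwise surjectivity hypothesis of that lemma: at each nonzero difference some $f_{\Orb_{\G,j}}$ is nonzero, so the associated $m\times m\kappa_\w$ matrix is onto $\Real^m$ (otherwise it is the zero map). In your version, discriminability is used only at the very end, to compose the injection $\psi|_{\V_{\Split{1}}}$ with a $\Sens$ that is injective on $\cl{\mathcal{W}}$, and the prevalence statement is the packaged Theorem 2.2 of the same reference applied to the compact set $\cl{\mathcal{W}}$ of upper box dimension at most $k$. What your route buys is modularity and a cleaner handling of closures: you only need $\psi$ injective on $\V_{\Split{1}}$ itself, whereas the paper's proof takes differences over $\cl{\V_{\Split{1}}}$ and thus implicitly needs the non-vanishing of $\F_\w$ on differences of closure points, a point you correctly flag. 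What the paper's route makes more visible is the structural dichotomy (the matrix $A$ is either surjective or identically zero), which is the conceptual reason the surjectivity hypothesis of the underlying lemma costs nothing here. Your Lipschitz bound $\uboxdim(\psi(\cl{\V_{\Split{1}}}))\le k$ and the identification of prevalence with Lebesgue-almost-everywhere on the finite-dimensional space of linear maps are both sound, so no gap remains.
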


The two-step linear embedding \eqref{eqn:hash} with embedding dimension 
twice that of the data set,
is guaranteed to appropriately embed a data set $\V$ as long as the linear tensor $\G$-invariant $\F_\w$ is discriminable over $\V$.

We make three comments on Theorem \ref{thm:Sauer}, starting with storage complexity. In its original version~\cite{Sauer1991} for sequence data spaces, 
the value $k$, 
is taken as the box-counting dimension of the (closure of the) \emph{whole} data set $\V$.
We intuitively expect a ``factor of $\size{\G}$ savings'', as we only need to differentiate between canonical elements in $\Split{1}$.
Unfortunately for finite groups $\G$, the box-dimension $k= \boxdim(\cl{\V_\Split{1}})$ will always equal $\boxdim(\cl{\V})$.
However in the next subsection, we assume $\V$ to be finite, and we observe savings in Johnson-Lindenstrauss embeddings.
%
%

Secondly the computational complexity of evaluating $\Sens \F_\w$ is exactly 
$m n^\w$, polynomial in data dimension $n$ (for fixed $m,\w$). 
Each coordinate of $\Sens \F_\w$ is obtained by a weighted average of linear functions $f_{\Orb_{\G,i}(\X^\tpow)}$, $ 1\leq i \leq \kappa_\w$.

\renewcommand{\fn}{\footnote{Kondor~et.~al. represented each data corresponding to edge $\{i,j\}$, in a redundant fashion using multiple coefficients $a_x$ of $\a\in \L[\Sym{\ell}]$, for all $x$ that send $\{\ell-1,\ell\}$ to $\{i,j\}$.}}
Thirdly the linearity of $\Sens \F_\w$ may be exploited to reduce computation. 
For example in~\cite{Kondor2009}, Kondor~et.~al. used a subspace of $\L[\Sym{\ell}]$ 
to represent\fn~graphical data on $\ell$ nodes, a $(\Sym{\ell})$-space where $n=\ell!$, see~\cite{Kondor2008}.
Now if the data lives in a $k$-dimensional subspace $\V$ of $\LX$, $k < n$, let $A : \Real^k \rightarrow \V$ be a linear map onto $\V$. 
Then the tensor product map $A^\opow : \Real^{k^\w} \rightarrow \V^\opow$, where $\V^\opow \subset \L[\X^\tpow]$, is \emph{linear} in its domain $\Real^{k^\w}$. 
Now the concatenated map from $\Real^{k^\w}$ to $\Real^m$ will be $\Sens \F_\w A^\opow$, where
each coordinate is obtained by a map obtained from a weighted average of functions $f_{\Orb_{\G,i}(\X^\tpow)}A^\opow$, $1 \leq j \leq \kappa_\w$, and this map is linear (and can be evaluated in $\kappa^\w$ operations. 
Hence, the total evaluation complexity of $\Real^{k^\w}$ to $\Real^m$ equals $m k^\w$, where again $k$ is the data dimension.
In the above example where $\X = \Sym{\ell}$, we have $k = {\ell \choose 2}$, 
so the complexity equals $\mathcal{O}(m \ell^{2\w})$,
which (for fixed $m,\w$) is polynomial in the number of nodes $\ell$.

\subsection{How many dimensions are needed to preserve isometries of non-sequential data?} \label{ssect:JL}

\newcommand{\err}{\mathcal{E}_\G}

Theorem \ref{thm:Sauer} does not provide any notion of distance isometries under embedding, important for certain ``sketching''-type applications.
An important result for isometry preservation is the Johnson-Lindenstrauss lemma.
In this part, the data set $\V$ will be assumed to contain a \emph{finite} number of discrete points in $\LX$.
Also here, we state the discriminabilty hypothesis slightly differently.
By 2-norm $\Norm{\cdot}{2}$ on elements in $\L[\X^\tpow]$, we mean the norm 
\begin{align}
\Norm{\barr}{2} = \sqrt{\sum_{\xw \in \X^\tpow  }b_\xw^2}. \label{eqn:2norm}
\end{align}
as if we were treating $\L[\X^\tpow]$ as $\Real^{(n^\w)}$.
Assuming that $\F_\w$ is discriminable over $\V$, there must exist some constant $\RIC < 1$, such that 
if for any $\a_1,\a_2\in\V_\Split{1}$, 
where $\Split{1}$ is any fundamental region, 
we have
\begin{align}
 \Norm{A_{\F_\w}(\att_1^\opow- \att_2^\opow)}{2}^2 
 \leq \RIC \cdot \Norm{\att_1^\opow- \att_2^\opow}{2}^2, \label{eqn:RIC}
\end{align}
where $A_{\F_\w} : \L[\X^\tpow] \rightarrow \L[\X^\tpow]$ is the orthogonal projection onto the kernel of $\F_\w$. 
That is for canonical elements $\a_1,\a_2\in\V_\Split{1}$, the constant $\RIC$ captures the maximal fraction of ``energy'' of the error $\att_1^\opow - \att_2^\opow$ in the kernel of $\F_\w$. 

The following theorem is a analogue of (the most basic form of the) the Johnson-Lindenstrauss lemma, for two-step linear embeddings \eqref{eqn:hash} over finite $\G$-spaces.
The result is stated for the case where the coefficients of $\Sens$ are sampled from the normal distribution.
However as in many works~\cite{Candes2004,Achlioptas2003,Baraniuk2008,Rudelson2010}, extensions to more general distributions
should not be too difficult.

\begin{theorem} \label{thm:JL}
We take $\X,\G, \L[\X]$ and $\Split{1}$ as defined in Theorem \ref{thm:Sauer}.
Let $\V$ contain a finite number of discrete points in $\LX$. 
Let $k = \size{\V_\Split{1}}$. For some $\w \geq 1$, assume $\F_\w$ is discriminable over $\V$, and that the constant $\RIC < 1$ satisfies \eqref{eqn:RIC}.
Assume that the size $m \times \kappa_\w$ linear map $\Sens$, has coefficients independently sampled from a normal distribution with variance $1/m$.
Then with probability at least $1 - \beta$, if the embedding dimension $m$ of the map $\Sens$ exceeds 
\begin{align}
\frac{2\log k + \log(1/\beta)}{\alpha((\eps-\RIC)/(1-\RIC) )} \label{eqn:JLrate}
\end{align}
where $\alpha(y) = y^2 - y^3$ for any $y\in \Real$, we will have for any 
$\a_1,\a_2 \in \V$, $\a_1 \neq \a_2$, the following isometries
\begin{align}
 \Norm{\Sens \F_\w(\att_1^\opow- \att_2^\opow)}{2}^2~ 
 \begin{cases}
   ~\leq (1+\eps) \cdot \Norm{\mat{b}_1^\opow- \mat{b}_2^\opow}{2}^2, \\
   ~\geq (1-\eps) \cdot \Norm{\mat{b}_1^\opow- \mat{b}_2^\opow}{2}^2,
 \end{cases}
 \label{eqn:JL}
\end{align}
for any positive $\eps > \RIC$, 
and canonical elements $\mat{b}_1,\mat{b}_2 $ (where
$\mat{b}_1  = \LeftAct{\att_1}{g_1}$ and $\mat{b}_2  = \LeftAct{\att_2}{g_2}$ 
for some $g_1,g_2 \in \G$ such that $\mat{b}_1, \mat{b}_1 \in \V_\Split{1}$).
\end{theorem}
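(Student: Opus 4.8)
The plan is to reduce Theorem~\ref{thm:JL} to the classical Johnson--Lindenstrauss lemma applied to a carefully chosen finite point set, while using the hypothesis \eqref{eqn:RIC} to control the energy that the invariant $\F_\w$ destroys. First I would set up the geometry: for each pair $\att_1,\att_2 \in \V_\Split{1}$ write the tensor difference $\mat{v} = \att_1^\opow - \att_2^\opow \in \L[\X^\tpow]$, and split it orthogonally as $\mat{v} = \mat{v}_{\ker} + \mat{v}_{\ker^\perp}$, where $\mat{v}_{\ker} = A_{\F_\w}\mat{v}$ lies in the kernel of $\F_\w$ and $\mat{v}_{\ker^\perp}$ lies in its orthogonal complement. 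Since $\F_\w$ has unity operator norm and (by the normalization in \eqref{eqn:F}) acts as an isometry on $\ker^\perp$, we have $\Norm{\F_\w(\mat{v})}{2}^2 = \Norm{\mat{v}_{\ker^\perp}}{2}^2$, and hypothesis \eqref{eqn:RIC} gives $\Norm{\mat{v}_{\ker}}{2}^2 \le \RIC \Norm{\mat{v}}{2}^2$, hence $(1-\RIC)\Norm{\mat{v}}{2}^2 \le \Norm{\F_\w(\mat{v})}{2}^2 \le \Norm{\mat{v}}{2}^2$. This pins the length distortion introduced by step one to the window $[1-\RIC,\,1]$.

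Next I would invoke the standard Gaussian JL concentration bound: for a fixed vector $\mat{u} \in \Real^{\kappa_\w}$ and $\Sens$ with i.i.d.\ $\mathcal{N}(0,1/m)$ entries, $\Pr\!\big[\,\big|\,\Norm{\Sens \mat{u}}{2}^2 - \Norm{\mat{u}}{2}^2\,\big| > t \Norm{\mat{u}}{2}^2\,\big] \le 2\exp(-m\,\alpha(t)/2)$ with $\alpha(t)=t^2-t^3$ (this is exactly the form quoted via the function $\alpha$ in \eqref{eqn:JLrate}; I would cite Achlioptas~\cite{Achlioptas2003} or the analyses in~\cite{Baraniuk2008}). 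Apply this with $\mat{u} = \F_\w(\mat{v})$ and a tolerance $t$ to be chosen, over all $\binom{k}{2}$ pairs from $\V_\Split{1}$; a union bound makes the failure probability at most $2\binom{k}{2}\exp(-m\alpha(t)/2) \le k^2 \exp(-m\alpha(t)/2)$, which is $\le \beta$ once $m \ge (2\log k + \log(1/\beta))/\alpha(t)$. The remaining arithmetic is the composition of the two distortion layers: $\Norm{\Sens\F_\w(\mat{v})}{2}^2 \in [(1-t)\Norm{\F_\w(\mat{v})}{2}^2,\,(1+t)\Norm{\F_\w(\mat{v})}{2}^2] \subseteq [(1-t)(1-\RIC)\Norm{\mat{v}}{2}^2,\,(1+t)\Norm{\mat{v}}{2}^2]$. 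To land inside $[(1-\eps),(1+\eps)]\Norm{\mat{v}}{2}^2$ one needs $(1+t) \le 1+\eps$ and $(1-t)(1-\RIC)\ge 1-\eps$; solving the second for the tightest $t$ gives $t = (\eps-\RIC)/(1-\RIC)$, which is exactly the argument of $\alpha$ in \eqref{eqn:JLrate}, and one checks $t\le\eps$ so the upper side is automatic. This forces $\eps > \RIC$, matching the hypothesis.

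Finally I would discharge the passage from $\V$ to $\V_\Split{1}$. Given arbitrary distinct $\att_1,\att_2\in\V$, pick $g_1,g_2\in\G$ with $\mat{b}_i = \LeftAct{\att_i}{g_i}\in\Split{1}$; since the normalization (see the footnote after \eqref{eqn:hash}) is group-equivariant, $\F_\w(\att_i^\opow)=\F_\w(\mat{b}_i^\opow)$, and likewise $\Norm{\att_i^\opow}{2}$-scaling is preserved, so the isometry statement \eqref{eqn:JL} for the $\mat{b}_i$'s is literally the one the union bound establishes; the right-hand sides in \eqref{eqn:JL} are written in terms of $\mat{b}_1^\opow-\mat{b}_2^\opow$ precisely so the reduction is an identity, not an approximation. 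One subtlety to handle carefully: if $\att_1\neq\att_2$ but they are $\G$-equivalent, then $\mat{b}_1=\mat{b}_2$ and \eqref{eqn:JL} holds trivially with both sides zero, so no genuine pair is lost. I expect the main obstacle to be the first paragraph rather than the JL machinery: one must argue cleanly that $\F_\w$ restricted to $\ker^\perp$ is an exact isometry (so the only length loss is the orthogonal-projection defect bounded by $\RIC$), which relies on the orbit-cardinality normalization $\size{(\Orb_{\G,i})}^{-1/2}$ in \eqref{eqn:F} making the rows of $\F_\w$ an orthonormal system --- this should be verified directly from \eqref{eqn:f}, \eqref{eqn:inner}, and the disjointness of the orbits.
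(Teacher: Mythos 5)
Your proposal is correct and follows essentially the same route as the paper's proof: reduce to canonical pairs via $\G$-invariance of $\Sens\F_\w$, apply the Gaussian concentration bound with a union bound over the $\binom{k}{2}$ pairs at internal tolerance $t=(\eps-\RIC)/(1-\RIC)$, and sandwich $\Norm{\F_\w(\mat{v})}{2}^2$ between $(1-\RIC)\Norm{\mat{v}}{2}^2$ and $\Norm{\mat{v}}{2}^2$ using the orthogonal kernel decomposition and hypothesis \eqref{eqn:RIC}. Your explicit verification that the orbit normalization makes the rows of $\F_\w$ orthonormal (so that $\F_\w$ is an isometry on the kernel complement) is exactly the fact the paper relegates to a footnote, and your handling of the degenerate $\G$-equivalent pairs is a welcome extra care.
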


The factor $\eps$ in \eqref{eqn:JL} should not be too close to the constant $\RIC$ in \eqref{eqn:RIC} - this increases the required value for $m$ (it affects the denominator of \eqref{eqn:JLrate}).
As opposed to the previous Whitney embeddings, the (potential) ``factor of $\size{\G}$'' savings appear in $k$ (here $k = \size{\V_\Split{1}}$ not $k = \size{\V}$).
Do note there is a difference how these savings impact the embedding dimension $m$; unlike the previous Theorem \ref{thm:Sauer} where the factor of $\size{\G}$ impacts $m$ multiplicatively (seen from the required assumption $m > 2k$), in Theorem \ref{thm:JL} this factor impacts $m$ logarithmically (seen from \eqref{eqn:JLrate}). 
Also as seen form \eqref{eqn:JL}, 
the isometries are measured in the tensor space $\L[\X^\tpow]$ (not in 
the data space 
$\LX$). 
If one desires isometries in the original space, one requires some equivalence between the 2-norms of both spaces $\LX$ and $\L[\X^\tpow]$, not addressed here.

The next section provides technical proofs for the Theorems \ref{thm:Sauer} and \ref{thm:JL}.

\section{Technical Proofs} \label{sect:proofs}

\subsection{Proof of Theorem \ref{thm:Sauer}:}

The proof follows relatively closely with~\cite{Sauer1991}, though the consideration of $\G$-invariants allow certain simplifications, also see~\cite{Blumensath2009}.

{
\newcommand{\MapT}{A}
\newcommand{\M}{\mat{P}}
\newcommand{\alp}{\beta}
\newcommand{\Alp}{{\pmb{\beta}}}
\newcommand{\W}{\mathcal{W}}
\renewcommand{\omega}{r}
\newcommand{\Ball}[2]{\mathcal{B}_{#2}(#1)}
\renewcommand{\sig}{\sigma}

First some new notation. 
For any $\a \in \Real^n$, for some positive integer $n$, we denote $\Ball{\a,\eps}{n}$ to be the $n$-dimensional ball of radius $\eps$, centered at $\a$.
For any map, sometimes denoted $\MapT$ here, for any set $\V$ that lies in the range of $\MapT$, we shall use $\MapT^{-1}(\V)$ to denote the \emph{pre-image} of $\V$.
For any $\V\subset \Real^n$ for any $n$, we denote the volume of $\V$ as $\vol(\V)$.
We will need the following two lemmas, simplified from~\cite{Sauer1991}. 
For convenience, the lemma proofs are reproduced in Appendix \ref{app:proof}

\begin{lemma}[\textit{c.f.}~Lemma 4.2,~\cite{Sauer1991}] \label{lem:42}
For some positive integers $r,m$, $m \leq r$, let $\MapT$ be some surjective linear map from $\Real^r$ to $\Real^m$. 
Let $\sig > 0$ be a smallest singular value of $\MapT$, obtained from any matrix form for $\MapT$.
Then for any $\eps > 0$ 
\begin{align}
  \frac{  \vol(\MapT^{-1}(\Ball{\eps}{m}) \cap \Ball{\RIC}{r})}{\vol(\Ball{\RIC}{r})} 
  < 2^{r/2} \cdot \left(\frac{\eps}{\sig\RIC}\right)^m, \label{eqn:vol}
\end{align}
where $\Ball{\eps}{r}$ and $\Ball{\eps}{m}$ are respectively $r$- and $m$-dimensional balls centered at the origin.
\end{lemma}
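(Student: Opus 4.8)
The plan is to estimate the volume of $\MapT^{-1}(\Ball{\eps}{m}) \cap \Ball{\RIC}{r}$ by observing that this set is contained in a slab-like region: the preimage of the small $m$-ball under $\MapT$ is a cylinder whose ``cross-section'' in the row space of $\MapT$ has small diameter controlled by $\sig$, while the remaining $r-m$ directions (the kernel of $\MapT$) are unconstrained except by the bounding ball $\Ball{\RIC}{r}$. First I would put $\MapT$ into a convenient form using the singular value decomposition: write $\MapT = U \Sigma V^{\mathsf T}$ with $U\in\Real^{m\times m}$, $V\in\Real^{r\times r}$ orthogonal and $\Sigma = [\,D \mid 0\,]$ where $D = \mathrm{diag}(\sig_1,\dots,\sig_m)$ with $\sig_m \geq \sig$ (here $\sig$ is the smallest singular value, which is positive since $\MapT$ is surjective). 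Since orthogonal transformations preserve both volumes and Euclidean balls centered at the origin, I may replace $\MapT$ by $\Sigma$ and work in the coordinates where the first $m$ coordinates are ``scaled-then-projected'' and the last $r-m$ coordinates are annihilated.

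Next I would describe the set explicitly in these coordinates. Writing a point of $\Real^r$ as $(\mat{u},\mat{v})$ with $\mat{u}\in\Real^m$, $\mat{v}\in\Real^{r-m}$, the preimage condition $\MapT x \in \Ball{\eps}{m}$ becomes $\|D\mat{u}\|_2 \leq \eps$, which forces $\|\mat{u}\|_2 \leq \eps/\sig$ since each singular value is at least $\sig$; hence $\mat{u}$ lies in the $m$-ball of radius $\eps/\sig$. The intersection with $\Ball{\RIC}{r}$ then forces $\|\mat{v}\|_2 \leq \RIC$, so $\mat{v}$ lies in the $(r-m)$-ball of radius $\RIC$. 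Therefore
\begin{align}
\vol\bigl(\MapT^{-1}(\Ball{\eps}{m}) \cap \Ball{\RIC}{r}\bigr) \leq \vol\bigl(\Ball{\eps/\sig}{m}\bigr)\cdot \vol\bigl(\Ball{\RIC}{r-m}\bigr).
\end{align}
Now I would divide through by $\vol(\Ball{\RIC}{r})$ and use the scaling $\vol(\Ball{\RIC}{d}) = \RIC^{d}\,\omega_d$ where $\omega_d$ is the volume of the unit $d$-ball. The ratio becomes $(\eps/(\sig\RIC))^m \cdot \omega_m\,\omega_{r-m}/\omega_r$, so it remains to bound the dimension-only constant $\omega_m\,\omega_{r-m}/\omega_r$ by $2^{r/2}$. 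This is the one genuinely computational point, and it is not really hard: using $\omega_d = \pi^{d/2}/\Gamma(d/2+1)$, the ratio equals a quotient of Gamma functions which one bounds crudely — for instance via $\omega_m\omega_{r-m}/\omega_r \leq 2^{r/2}$ which follows from elementary estimates on binomial-type Gamma ratios (indeed $\omega_m\omega_{r-m}/\omega_r$ is, up to the $\pi$ factors which cancel, the reciprocal of a Beta value and is crudely dominated by $2^{r}$, and the sharper $2^{r/2}$ is what is recorded in \cite{Sauer1991}).

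The main obstacle, such as it is, is purely bookkeeping: making the SVD reduction cleanly so that ``smallest singular value $\sig$'' genuinely gives the containment $\|\mat u\|_2 \le \eps/\sig$ for \emph{every} point of the preimage (one must be careful that the preimage is a full cylinder over the kernel, not just a bounded ellipsoid), and then pinning down the dimensional constant to the stated $2^{r/2}$ rather than a weaker bound. Since the statement says this is simplified from \cite{Sauer1991} and the proof is reproduced in Appendix \ref{app:proof}, I expect the argument there to follow exactly this SVD-plus-Fubini route, possibly absorbing slack into the factor $2^{r/2}$ so as not to optimize constants.
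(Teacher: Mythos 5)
Your proposal is correct and follows essentially the same route as the paper's proof in Appendix A: containment of the intersection in a cylinder of base radius $\eps/\sig$ over the kernel directions (justified by the smallest singular value), the product volume bound $(\eps/\sig)^m \RIC^{r-m}\vol(\mathcal{B}_m(1))\vol(\mathcal{B}_{r-m}(1))$, and then bounding the resulting ratio of unit-ball volumes $\vol(\mathcal{B}_m(1))\vol(\mathcal{B}_{r-m}(1))/\vol(\mathcal{B}_r(1))$, a generalized binomial coefficient, by $2^{r/2}$. Your write-up is in fact somewhat more explicit than the paper's (which delegates the cylinder containment to a citation of Sauer et al.), but the argument is the same.
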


\renewcommand{\MapT}{\rho}
\begin{lemma}[\textit{c.f.}~Lemma 4.3,~\cite{Sauer1991}] \label{lem:45}
Let $\V$ be a bounded subset of $\Real^n$, with $k = \boxdim(\overline{\V})$, and we assume this limit $k$ exists.
Let $\MapT_1,\cdots, \MapT_\omega$ be $r$ number of Lipschitz maps from $\Real^n$ to $\Real^m$. 
Further assume that for each $\a \in \V$, the linear map $A : \Real^r \rightarrow \Real^m$ described by the matrix $[\MapT_1(\a),\cdots,\MapT_\omega(\a)]$, is surjective.

For each $\Alp \in \Real^\omega$ with bounded 2-norm, $\Alp = [\alp_1,\cdots, \alp_\omega]$, define $\MapT_\Alp = \sum_{i=1}^\omega \alp_i \MapT_i$.
Then for almost every such bounded $\Alp$, the preimage $\MapT_\Alp^{-1}(\mat{0})$ of the map $\MapT_\Alp$ w.r.t. the single point $\mat{0}$, has lower box-counting dimension at most $k-m$. If $k > m$, then $\MapT_\Alp^{-1}(\mat{0})$ is empty for almost every $\Alp$. 
\end{lemma}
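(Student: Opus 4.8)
The plan is to follow the classical prevalence argument of Sauer--Yorke--Casdagli, covering $\overline{\V}$ by small boxes and estimating, for each box, the measure of the set of bad parameters $\Alp$. First I would fix a small $\eps > 0$ and let $\{C_j\}_{j=1}^{N_\eps(\V)}$ be the grid of boxes of side $\eps$ that cover $\V$, with $N_\eps(\V) \approx \eps^{-k}$ for $\eps$ small (using $k = \boxdim(\overline{\V})$). Since there are only finitely many maps $\MapT_1,\dots,\MapT_r$ and they are Lipschitz, on each box $C_j$ the map $\MapT_\Alp$ is Lipschitz with a constant $L$ independent of $j$ and of $\Alp$ ranging over a fixed bounded set $B \subset \Real^r$. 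The surjectivity hypothesis on $A = [\MapT_1(\a),\dots,\MapT_r(\a)]$ at each $\a\in\V$, together with compactness, gives a uniform lower bound $\sig_0 > 0$ on the smallest singular value of $A$ over $\overline{\V}$; this is what lets Lemma \ref{lem:42} be applied with a box-independent constant.

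Next, for a fixed center $\a_j$ of box $C_j$, consider the linear-in-$\Alp$ map $\Alp \mapsto \MapT_\Alp(\a_j) = A(\a_j)\Alp$. The bad event for this box is roughly ``$\MapT_\Alp^{-1}(\mathbf{0})$ meets $C_j$'', which (after absorbing the Lipschitz oscillation over a box of size $\eps$) is contained in the event that $\MapT_\Alp(\a_j) \in \Ball{c\eps}{m}$ for a constant $c$ depending on $L$. Pulling back through the surjective linear map $A(\a_j)$ and intersecting with the bounded parameter ball, Lemma \ref{lem:42} bounds the fraction of such $\Alp$ by $2^{r/2}(c\eps/(\sig_0 R))^m$ where $R$ is the radius of $B$. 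Summing over the $N_\eps(\V) \approx \eps^{-k}$ boxes, the measure of the set of $\Alp \in B$ for which $\MapT_\Alp^{-1}(\mathbf{0})$ is ``$\eps$-large'' near $\V$ is $O(\eps^{m-k})$ up to constants. When $m > k$ this tends to $0$ as $\eps \to 0$, so a Borel--Cantelli / countable-intersection argument over a sequence $\eps = 2^{-\ell}$ shows that for almost every $\Alp \in B$ the preimage $\MapT_\Alp^{-1}(\mathbf{0})$ is empty; since $B$ was an arbitrary bounded set, this gives the last sentence of the lemma.

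For the general case $m \le k$ one keeps more bookkeeping: I would cover not $\V$ itself but, for each scale $\eps = 2^{-\ell}$, estimate the number of $\eps$-boxes needed to cover $\MapT_\Alp^{-1}(\mathbf{0}) \cap \V$. The per-box bad-parameter bound $2^{r/2}(c\eps/(\sig_0 R))^m$ shows that the expected number of boxes (integrated against the normalized measure on $B$) that both meet $\V$ and come within $c\eps$ of the zero set is $O(\eps^{-k}\cdot\eps^{m}) = O(\eps^{m-k})$; hence for almost every $\Alp$ the covering number of $\MapT_\Alp^{-1}(\mathbf{0})\cap\V$ at scale $\eps$ is $O(\eps^{-(k-m)})$ along the sequence $\eps=2^{-\ell}$, which is exactly the statement $\lboxdim(\MapT_\Alp^{-1}(\mathbf{0})) \le k-m$. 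The main obstacle is the uniformization step: making sure the Lipschitz constant and, more delicately, the lower singular-value bound $\sig_0$ are genuinely uniform over all of $\overline{\V}$ and over the parameter region, so that the per-box estimate from Lemma \ref{lem:42} has constants that do not degrade as the boxes shrink; this is where the compactness of $\overline{\V}$ and the finiteness of the family $\{\MapT_i\}$ are essential, and it is the point at which one must be careful that the hypothesis is stated for $\a \in \V$ but needs to be propagated to $\overline{\V}$ (or, alternatively, the whole argument is run with $\V$ replaced by $\overline{\V}$, which has the same box dimension $k$).
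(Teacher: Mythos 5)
Your proposal is correct and follows essentially the same route as the paper's proof: cover $\V$ at scale $\eps$, apply the volume estimate of Lemma \ref{lem:42} per box via the surjectivity hypothesis, sum over the roughly $\eps^{-k}$ boxes, and pass to the limit with a first-moment (counting) argument for the dimension bound and a vanishing-measure argument for emptiness when $m>k$. Your explicit attention to the uniformity of the Lipschitz constant and the lower singular-value bound over $\overline{\V}$ is a point the paper's proof handles only implicitly (via the choice of a ``large enough'' parameter radius), but it does not constitute a different approach.
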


\renewcommand{\fn}{\footnote{Does this still work for unbounded $\Alp$?}}
}

{
\newcommand{\Alp}{{\pmb{\beta}}}
\newcommand{\MapT}{\rho}
\newcommand{\M}{\mat{D}}
\newcommand{\m}{\mat{d}}
\newcommand{\alp}{\alpha}
\renewcommand{\S}{\V\bI{2}}
\renewcommand{\e}{\mat{e}}
\begin{proof}[Proof of Theorem \ref{thm:Sauer}]

We begin by making a connection with Lemma \ref{lem:45}, first specifying for some positive integers $n_2, r$, the Lipschitz maps $\MapT_1,\cdots, \MapT_r$ (where each $\MapT_i : \Real^{n_2} \rightarrow \Real^m$), and vectors $\Alp$ in $\Real^r$.
Note, here $n_2$ replaces $n$ in Lemma \ref{lem:45}.

The domain $\Real^{n_2}$, where $n_2 = n^w$, is identified with $\L[\X^\tpow]$, 
and we set the maps $\MapT_i : \L[\X^\tpow] \rightarrow \Real^m$ as 
\begin{align}
\MapT_{i+ m(j-1)} : \lb a_\xw \rb \mapsto 
{\size{(\Orb_{\G,j})}}^{-\frac{1}{2}}  \cdot f_{\Orb_{\G,j}}(\lb a_\xw \rb) \cdot \e_i \label{eqn:thm1_1}
\end{align}
using the 1-Lipschitz functions $f_{\Orb_{\G,j}}$ appearing in \eqref{eqn:F},
for all $1 \leq i \leq m$, $1 \leq j \leq \kappa_\w$, and where $\e_1,\cdots, \e_m$ constitute any basis of $\Real^m$.
Thus here $r = m \kappa_\w$, and 
we associate each vector $\Alp$ in $\Real^{m\kappa_\w}$ with the linear map $\Sens : \Real^{\kappa_\w} \rightarrow \Real^m$, where $\Alp$ is formed by column-wise stacking of the coefficients from the matrix representation of $\Sens$. 
Under these associations, it becomes clear that the map $\MapT_\Alp : \L[\X^\tpow] \rightarrow \Real^m$ in the statement of Lemma \ref{lem:45}, equals $\Sens \F_\w$.

Let $\S = \{\a_1^\opow - \a_2^\opow : \a_1, \a_2 \in \cl{\V_\Split{1}}, \a_1 \neq \a_2\}$,
\textit{i.e.}, $\S$ is (homomorphic) to the set of non-equal pairs of $\cl{\V_\Split{1}}$.
We want to apply Lemma \ref{lem:45} with $\S$ replacing $\V$, 
with $2k$ replacing $k$ (since $\boxdim(\cl{\S}) \leq 2k$).
If the lemma applies, this shows
one-to-one mapping on $\V_\Split{1}$, which proves the theorem. 
To do so, we need to show that for each $\lb a_\xw \rb \in \V_2$, the linear map $A : \Real^{m\kappa_\w} \rightarrow \Real^m$ as described in the statement of Lemma \ref{lem:45}, is surjective. 
This will follow from the hypothesis that $\F_\w$ is discriminable over $\V$, which implies that for each $\lb a_\xw \rb \in \V_2$, there exists some function $f_{\Orb_{\G,j}}$, $1 \leq j \leq  \kappa_\w$, such that $f_{\Orb_{\G,j}}(\lb a_\xw \rb) \neq 0$. 
By the association of $A$ with the matrix $[\MapT_1(\lb a_\xw \rb),\cdots,\MapT_{m\kappa_\w}(\lb a_\xw \rb)]$, from 
\eqref{eqn:thm1_1} we conclude that since $f_{\Orb_{\G,j}}(\lb a_\xw \rb) \neq 0$ for some $j$, the map $A$ will indeed be surjective. Thus the result is proved. \qed
\end{proof}
}

The key to the proof is the discriminabilty hypothesis. 
The important point is that 
does not impact embedding dimension $m$; here $m$ is tied directly to data size (tied to $k=\boxdim(\cl{\V_\Split{1}})$).
We also point out that while Sauer et.~al. discuss more generalized versions of Lemmas \ref{lem:42} and \ref{lem:45} that do not require surjectivity of $A$ (see~\cite{Sauer1991}, Lemma 4.6), these generalizations are not useful here. This is because as our proof of Theorem \ref{thm:Sauer} reveals, the map $A$ is either surjective (in the case discriminabilty holds) or otherwise the zero-map (in the case  discriminabilty does not hold).

\subsection{Proof of Theorem \ref{thm:JL}:}

The proof here also follows with simple modifications, by appropriately incorporating discriminabilty notions. 
Standard concentration results, such as the following one, will be useful (for convenience, its proof is reproduced in Appendix \ref{app:proof}).

\newcommand{\Am}{\pmb{A}}
\newcommand{\Z}{Z}
\newcommand{\n}{\ell}
\newcommand{\h}{\theta}
\begin{lemma}[\textit{c.f.},~\cite{Baraniuk2008,Achlioptas2003}] \label{lem:conc}
Let $\Am$ be an $m \times \n$ random matrix, whose matrix entries are standard normal RVs.
Let the rows of $\Am$ be independent.
Then for any $\mat{x} \in \Real^\n$, for any $\eps > 0$ we have
\begin{align}
	\Pr\left\{ \left|~\Norm{(1/\sqrt{m})\cdot \Am\mat{x}}{2}^2  - \Norm{\mat{x}}{2}^2~\right| \leq \eps \right\}  \geq 1 - 2e^{-\frac{m}{4}(\eps^2 - \eps^3)} \label{eqn:probineq}
\end{align}
\end{lemma}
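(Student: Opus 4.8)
The plan is to prove \eqref{eqn:probineq} by the classical chi-square concentration argument that underlies the Gaussian Johnson--Lindenstrauss lemma; none of the group-theoretic machinery of the paper enters, this is a purely probabilistic estimate. First I would normalize: by homogeneity it suffices to treat a unit vector, $\Norm{\mat{x}}{2}=1$, reading the tolerance in \eqref{eqn:probineq} as the relative bound $\eps\,\Norm{\mat{x}}{2}^2$ (the only scaling consistent with the fluctuations of $\Am\mat{x}$, and the form in which the estimate is applied in the proof of Theorem \ref{thm:JL}). Writing the rows of $\Am$ as $\Am_{(1)},\dots,\Am_{(m)}$, each $Y_j:=\langle \Am_{(j)},\mat{x}\rangle$ is a linear combination of the independent standard normal entries of row $j$ with unit coefficient vector $\mat{x}$, hence $Y_j\sim\mathcal{N}(0,1)$; by independence of the rows, $Y_1,\dots,Y_m$ are i.i.d.\ $\mathcal{N}(0,1)$. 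Therefore $\Norm{(1/\sqrt m)\,\Am\mat{x}}{2}^2=\tfrac1m\sum_{j=1}^m Y_j^2 = Q/m$ with $Q\sim\chi^2_m$, and the complement of the event in \eqref{eqn:probineq} is exactly $\{\,|Q/m-1|>\eps\,\}$.

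Next I would bound the two tails of $Q/m$ by the Chernoff/Markov method using the chi-square moment generating function $\E[e^{tQ}]=(1-2t)^{-m/2}$, valid for $t<1/2$. Minimizing $e^{-t(1+\eps)m}(1-2t)^{-m/2}$ over $t$ (the minimizer is $t=\eps/(2(1+\eps))$) yields $\Pr\{Q\ge(1+\eps)m\}\le\bigl((1+\eps)e^{-\eps}\bigr)^{m/2}=e^{-\frac m2(\eps-\ln(1+\eps))}$; symmetrically, using $\E[e^{-tQ}]=(1+2t)^{-m/2}$ for $t>0$, $\Pr\{Q\le(1-\eps)m\}\le\bigl((1-\eps)e^{\eps}\bigr)^{m/2}=e^{-\frac m2(-\eps-\ln(1-\eps))}$ (this lower-tail bound is vacuous when $\eps\ge1$, where the event is empty anyway).

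Finally I would collapse both exponents into the single form appearing in the statement, using the elementary inequalities $\eps-\ln(1+\eps)\ge\tfrac{\eps^2}{2}-\tfrac{\eps^3}{3}$ and $-\eps-\ln(1-\eps)\ge\tfrac{\eps^2}{2}$ on $[0,1)$, both immediate from the power series of $\ln(1\pm\eps)$. Each of these exceeds $\tfrac12(\eps^2-\eps^3)$, so each tail probability is at most $e^{-\frac m4(\eps^2-\eps^3)}$; a union bound over the upper and lower tails gives $\Pr\{|Q/m-1|>\eps\}\le 2e^{-\frac m4(\eps^2-\eps^3)}$, which is precisely the complement of \eqref{eqn:probineq}.

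I do not expect any genuine obstacle: the result is standard and the argument is short. The only places needing a little care are (i) reading \eqref{eqn:probineq} with a relative tolerance, so that the reduction to $\Norm{\mat{x}}{2}=1$ is legitimate, and (ii) the routine calculus comparisons of the exponents $\eps-\ln(1+\eps)$ and $-\eps-\ln(1-\eps)$ with the quadratic/cubic surrogate $\tfrac12(\eps^2-\eps^3)$, which is exactly what makes the upper- and lower-tail bounds merge into the clean exponent $\tfrac m4(\eps^2-\eps^3)$ of \eqref{eqn:probineq}.
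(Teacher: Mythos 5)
Your proposal is correct and follows essentially the same route as the paper's own proof: reduce to $\Norm{\mat{x}}{2}=1$ (reading the tolerance as relative, exactly as the paper does), identify $\Norm{\Am\mat{x}}{2}^2$ as a $\chi^2_m$ variable, apply the Chernoff bound with optimizer $t=\eps/(2(1+\eps))$ to get $\bigl((1+\eps)e^{-\eps}\bigr)^{m/2}$, and compare the exponent with $\tfrac12(\eps^2-\eps^3)$. You are slightly more explicit than the paper on the lower tail (which it dismisses as "similar"), but there is no substantive difference in method.
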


The proof of Theorem \ref{thm:JL} given below will follow for other (row independent) distributions of $\Am$, if probabilisitic inequalities similar to \eqref{eqn:probineq} are available. Indeed they are for many other of distributions, see \textit{e.g.},~\cite{Rudel,Baraniuk2008,Achlioptas2003}.
We do not go further into detail since this component is not our main focus. 
We use Lemma \ref{lem:conc} to prove our second main theorem.

\renewcommand{\fn}{\footnote{Strictly speaking, $\F_\w$ orthornormally projects onto the (coefficient space) of the complement of its kernel.}}
\begin{proof}[Proof of Theorem \ref{thm:JL}]

It suffices to show the result for pairs $\a_1,\a_2 \in \V_\Split{1}$, $\a_1 \neq \a_2$, of canonical elements, since the LHS of \eqref{eqn:JL} remains constant when replacing $\a_1,\a_2$ with $\mat{b}_1,\mat{b}_2$.
For $\Sens$ uniformly sampled (recall lemma statement) as $\Am = \Sens$, the probability that
\begin{align}
 \Norm{\Sens \F_\w(\att_1^\opow- \att_2^\opow)}{2}^2~ 
 \begin{cases}
   ~\leq (1+\eps) \cdot \Norm{\F_\w( \att_1^\opow-\att_2^\opow)}{2}^2, \\
   ~\geq (1-\eps) \cdot \Norm{\F_\w( \att_1^\opow-\att_2^\opow)}{2}^2,
 \end{cases} 
 \label{eqn:JL1}
\end{align}
holds for all ${k \choose 2} < k^2/2$  pairs whereby $\a_1,\a_2 \in \V_{\Split{1}}$, is at least $1 - k^2 \cdot e^{-\frac{m}{4}(\eps^2 - \eps^3)}$.
Here we used Lemma \ref{lem:conc} for each $\mat{x} = \Sens \F_\w(\att_1^\opow - \att_2^\opow)$, $\mat{x} \in \Real^m$. 
Comparing \eqref{eqn:JL1} with \eqref{eqn:JL}, the norm $\Norm{\cdot}{2}$ on the RHS needs to be applied on the $\L[X^\tpow]$, not $\Real^m$. 
Recall from its definition, see \eqref{eqn:F},
that $\F_\w$ is 1-Lipschitz and linear in $\Rw$, so 
the upper bound follows as
\[
\Norm{\F_\w(\att_1^\opow-\att_2^\opow)}{2}^2 \leq \Norm{(\att_1^\opow- \att_2^\opow)}{2}^2.
\]
For the lower bound, we use the hypothesis $\F_\w$ is $\RIC$-discriminable over $\V$,
where for the orthogonal projection $A_{\F_\w} : \L[\X^\tpow] \rightarrow \L[\X^\tpow]$ onto the kernel of $\F_\w$, see \eqref{eqn:RIC}, we have 
\ifdcol
\begin{align}
&\Norm{\F_\w(\att_1^\opow- \att_2^\opow)}{2}^2 + \RIC \cdot \Norm{\att_1^\opow- \att_2^\opow}{2}^2 \nn
&\geq \Norm{\F_\w(\att_1^\opow- \att_2^\opow)}{2}^2 + \Norm{A_{\F_\w}(\att_1^\opow- \att_2^\opow)}{2}^2 \nn
 &= \Norm{(\att_1^\opow- \att_2^\opow)}{2}^2, \label{eqn:orth}
\end{align}
\else
\begin{align}
\Norm{\F_\w(\att_1^\opow- \att_2^\opow)}{2}^2 + \RIC \cdot \Norm{\att_1^\opow- \att_2^\opow}{2}^2 \
&\geq \Norm{\F_\w(\att_1^\opow- \att_2^\opow)}{2}^2 + \Norm{A_{\F_\w}(\att_1^\opow- \att_2^\opow)}{2}^2 \nn
 &= \Norm{(\att_1^\opow- \att_2^\opow)}{2}^2, \label{eqn:orth}
\end{align}
\fi
equality following because both $\F_\w $ and $A_{\F_\w}$ project onto ``orthorgonal''\fn~spaces,  
which implies
\[
\Norm{\F_\w(\att_1^\opow-\att_2^\opow)}{2}^2 
 \geq (1-\RIC) \cdot \Norm{\att_1^\opow- \att_2^\opow}{2}^2.
\]
Using this in \eqref{eqn:JL1} and rearranging $(1-\eps)(1-\RIC)$, 
this proves that \eqref{eqn:JL} is satisfied with required probability, for constant $\eps(1- \RIC) +  \RIC > \eps $ (the strict inequality follows since $\RIC > 0$). 
The statement of the proposition will satisfy for some probability $\beta > k^2 \cdot e^{-\frac{m}{4}(\eps^2 - \eps^3)}$, and rescaling the $\eps$ term used here. \qed
\end{proof}

The linearity of the $\G$-invariant $\F_\w$ is very useful for deriving the lower bound \eqref{eqn:orth}), which admitted the use of orthonormality concepts.
It is also useful for deriving the upper bound, since it made it easy to check that $\F_\w$ is 1-Lipschitz.
We are now done with the proofs of both main results.

\begin{Remark}

For finite groups, there always exists an invariant satisfying the discriminability hypothesis~\cite{Dufresne} (albeit with super-exponential complexity,
see Theorem \ref{thm:Dufresne} and Fact \ref{fact:1}).
However from an embedding complexity standpoint, for any non-sequential data set, (theoretically) one can always find a two-step embedding meeting the guarantees in both Theorems \ref{thm:Sauer} and \ref{thm:JL}.

Also, the canonical points in any fundamental region $\Split{1}$, have a manifold structure within an algebraic variety (see supplementary material \ref{sm:invar2}).
Hence an interesting future direction is to connect with manifold learning techniques (\textit{e.g.},~\cite{Absil}).
\end{Remark}

\section{Conclusion} \label{sect:conc}
We present a new extension of 
linear embeddings for non-sequential data, providing two theorems
in the vein of Whitney embedding and the Johnson-Lindenstrauss lemma. 
For the latter, we show that accounting for data equivalences can provide
savings in embedding dimension up to a factor equal to the size of the invariance group (the savings is logarithmic in the second theorem). 
The extension was fairly simple, and we appeal to certain linearity properties of invariants.

\section*{Acknowledgment}
The author thanks J. Z. Sun for discussions and his reading of an initial draft, as well as R. Kakarala also for discussions and sending a copy of~\cite{Kakarala}.

\appendix 

\section{[Appendix] Proofs of Lemmas \ref{lem:42}, \ref{lem:45} and \ref{lem:conc}, appearing in Section \ref{sect:proofs}} \label{app:proof}

{
\newcommand{\MapT}{A}
\newcommand{\M}{\mat{P}}
\newcommand{\alp}{\beta}
\newcommand{\Alp}{{\pmb{\beta}}}
\newcommand{\W}{\mathcal{W}}
\renewcommand{\omega}{r}
\newcommand{\Ball}[2]{\mathcal{B}_{#2}(#1)}
\renewcommand{\sig}{\sigma}

\begin{proof}[Proof of Lemma \ref{lem:42}]
The set $\MapT^{-1}(\Ball{\eps}{m}) \cap \Ball{\RIC}{r}$ consists of points in $\Real^r$ with 2-norm at most $\RIC$, that get mapped to points in $\Real^m$ with 2-norm at most $\eps$.
Since $\MapT$ is surjective with smallest singular value $\sig > 0$, 
this set of points is contained in a cylindrical subset of $\Real^r$, with base dimension $m$, and base radius $\eps/\sigma$, see~\cite{Sauer1991}.
The volume of this cylindrical subset is at most 
$(\eps/\sig)^m \RIC^{r-m} \cdot \vol(\Ball{1}{m}) \cdot \vol(\Ball{1}{r-m})$, recall we assumed $m \leq r$.
On the other hand $\vol(\Ball{\RIC}{r}) = \RIC^r \cdot \vol(\Ball{1}{r})$.
Using these two facts 
and also the fact that the $\ell$-dimensional volumne $\vol(\Ball{1}{\ell}) = \pi^{\ell/2}/(\ell/2)!$,
we conclude \eqref{eqn:vol}. \qed 
\end{proof}

\renewcommand{\MapT}{\rho}

\begin{proof}[Proof of Lemma \ref{lem:45}]
As we consider $\Alp$ with bounded 2-norm, it suffices to replace $\Real^r$ with $\Ball{\mat{0},\delta}{\omega}$ for any $\delta > 0$, \textit{i.e.}, 
it suffices to restrict $\Norm{\Alp}{2} \leq \RIC$, for some $\RIC$ specified in the sequel.

\newcommand{\K}{{k^*}}

For any bounded $\Alp$, by assumption $\MapT_\Alp$ is Lipschitz, thus there exists some constant $C$ such that the image of any 
$\eps$-ball $\Ball{\eps}{n}$ under $\MapT_\Alp$, is contained by in some $(C\eps)$-ball in $\Real^n$ .
For $\K > 0$, consider $\eps^{-\K}$ number of $n$-dimensional $\eps$-balls, denoted $\Ball{\a_i,\eps}{n}$, with 
various centers $\a_i$ in $\V$.
If $\K > k$, we can find $\eps^{-\K}$ such balls that cover the set $\V$ of interest. 

\newcommand{\Err}{\mathcal{E}}
\renewcommand{\fn}{\footnote{For $n$ events $\Err_1,\cdots,\Err_n$, we have that the union bound $\sum_{i=1}^n\Pr\{\Err_i\} $ equals $ \sum_{i=1}^n \Pr\{\mbox{at least $i$ events $\Err_i$}\}$, see~\cite{Sathe2012}, thus we conclude that the union bound is greater than $j \cdot \Pr\{\mbox{at least $j$ events $\Err_i$}\}$ for any $j$, $1\leq j \leq n$.}}
Now for each $\Ball{\a_i,\eps}{n}$ in the covering of $\V$, the image of $\Ball{\a_i,\eps}{n}$ under $\MapT_\Alp$ contains $\mat{0}$, only if $\Norm{\MapT_\Alp(\a_i)}{2} < C \eps$ for the constants $C$ and $\eps$ above. 
For now, we make the following claim that for any $\a \in \Real^n$ and some large enough choice for $\RIC$
\begin{align}
\vol\left(\left\{\Alp \in \Ball{\RIC}{r} : \Norm{\MapT_\Alp(\a)}{2} < C \eps \right\} \right) \leq C_1 \eps^m \label{eqn:claim}
\end{align}
where $C_1$ is a positive constant. 
Then for any $\ell> 0$, by a standard argument\fn, the volume of $\Alp$ where at least $\eps^{-\ell}$ of the $\eps^{-\K}$ images of $\Ball{\a_i,\eps}{n}$ contain $\mat{0}$ (under $\MapT_\Alp$), is at most $C_1 \eps^{m-\K+\ell}$.
In other words, the preimage $\MapT_\Alp^{-1}(\mat{0})$ can be covered by less than $\eps^{-\ell}$ number of $\eps$-balls, with an exception of maps $\MapT_\Alp$ for which the volume of the corresponding $\Alp$ can be made small if $\ell > \K - m$ and $\eps$ is small. 
Thus we conclude when $\ell > \K - m$ and $\eps$ goes to $0$, we have $\lboxdim(\MapT_\Alp^{-1}(\mat{0})) \leq \ell$ for almost every $\Alp$ in $\Ball{\mat{0},\delta}{\omega}$. 
As this holds for all $\ell > \K -m$, and that $\K$ can be made arbitrarily close to $k$ for sufficiently small $\eps$, see~\cite{Sauer1991, Blumensath2009},
we have $\lboxdim(\MapT_\Alp^{-1}(\mat{0})) \leq k - m$.

We finish the proof by showing the earlier claim \eqref{eqn:claim}. 
Associate $\MapT_\Alp(\a)$ with a linear map $A$ as described in the lemma statement, whereby we assumed that $A$ is surjective.
Hence, the positive constant $\sig$ as given in the statement of Lemma \ref{lem:42} will exist.
We then can apply \eqref{eqn:vol}, by observing that the volume on the LHS of \eqref{eqn:claim}, equals 
the volume $\vol(\MapT^{-1}(\Ball{C\eps}{m}) \cap \Ball{\RIC}{r})$ similar to that the LHS of \eqref{eqn:vol} (with $\eps$ replaced by $C \eps$).
Thus for a large enough choice for $\RIC$ (where $C/(\sig \RIC) \leq 1$), we can find a constant $C_1$ that satisfies \eqref{eqn:claim}. \qed
\end{proof}

}

\begin{proof}[Proof of Lemma \ref{lem:conc}]
Express $\Norm{\Am\mat{x}}{2}^2 = \mat{x}^T(\Am^T \Am)\mat{x} = |\brak{\Am_i, \mat{x}} |^2$, where $\Am_i$ equals the $i$-th row of matrix $\Am$. 
Call $\Z_i = |\brak{\Am_i, \mat{x}} |^2$, and
observe $\E \Z_1 = \E \Z_i = \Norm{\mat{x}}{2}^2$, whereby 
without loss of generality we assume $\Norm{\mat{x}}{2}^2 = 1$.
We thus want to upper bound the probability $\Pr\{ | \sum_{i=1}^n \Z_i - m | > m \eps \}$.
We will only consider one side $\Pr\{ \sum_{i=1}^n \Z_i - m   > m \eps \}$, the other side $\Pr\{ \sum_{i=1}^n (-\Z_i) + m  > m \eps \}$ can be considered similarly.

By assumption $\Am$ has independent rows, the RV's $\Z_i$ are mutually independent. 
Then by Markov's inequality, for any $ \h> 0$
\begin{align}
\Pr\left\{ \sum_{i=1}^n \Z_i - m  > m \eps \right\}
&\leq e^{-m\h(\eps + 1) } \cdot \left( \E e^{\h \Z_1}\right)^m, \label{eqn:markov}
\end{align}
where we used the fact that $\Z_i$'s are identically distributed.
Using the fact that the entries of $\Am$ are standard normal RV's, then $\Z_1$ is \emph{chi-squared} and for $\h < 1/2$, and its a standard result that $\E e^{\h \Z_1} = (1- 2\h)^{-m/2}$. Substituting this form for $\E e^{\h \Z_1}$ in \eqref{eqn:markov}, we optimize the upper bound  over $\h$, which requires $\h = \eps/(2+2\eps) < 1/2$. 
It follows that the LHS probability of \eqref{eqn:markov} is at most
$
\left[(1+\eps) e^{-\eps}\right]^{m/2}, 
$
and what we wanted to show follows from the bound $1+ \eps \leq \exp(\eps - (\eps^2 - \eps^3)/2)$. \qed
\end{proof}

\bibliographystyle{acm}

\par\setcounter{section}{0}\setcounter{subsection}{0}\setcounter{equation}{0}\gdef\theequation{SM-\Roman{section}.\arabic{equation}}%
\gdef\thesection{SM-\Roman{section}}

\section{[Supplementary Material] Background on Invariant theory} \label{app:invar}

\newcommand{\K}{\mathbb{R}}
\newcommand{\f}{f}
\newcommand{\Ring}{\K[Z_1,\cdots, Z_n]}
\newcommand{\xo}{x_1}

\subsection{The invariant ring always satisfies the discriminability hypothesis:} \label{sm:invar1}

We expect most readers to be unfamiliar with invariant theory. 
For their convenience, this first set of supplementary material briefly covers results/facts cited and alluded to in the main text. 
We begin with the connection of invariant theory to \emph{algebraic geometry} - the study of polynomial functions/equations.
We discuss the \emph{invariant ring}, \textit{i.e.}, the ring of invariant polynomial functions. We clarify how the $\G$-invariant $\F_\w$ in \eqref{eqn:F} actually relates to such functions, hence the kernel of $\F_\w$ relates to \emph{algebraic varieties}.
We state a result on \emph{seperating invariants} from Defrusne's thesis (Theorem \ref{thm:Dufresne}), that for finite groups the invariant ring has \emph{absolute} discriminative power.
We state the results that 
how the set of canonical points has an manifold structure as an algebraic variety (Theorem \ref{thm:canonical}). 
For a good reference text see Cox-Little-O'Shea~\cite{Cox}.

We assume some basic \emph{ring theory}. 
Denote $\Ring$ to be the ring of $n$-variate polynomials over $\K$. 
For $\f \in \Ring$, let $\f$ denote an $n$-variate polynomial with real coefficients.
We think of $\f$ as a polynomial \emph{function} with domain $\Real^n$, by letting $\f(a_1, \cdots, a_n)$ be the evaluation of $\f$ at point $(a_1, \cdots, a_n) \in \Real^n$.
By the identification of $\LX$ with $\Real^n$, we 
also think of $\f$ as a function on $\LX$, for some $\G$-space $\X$ where $\size{\X} = n$. 
For some $\a \in \LX$, we write the evaluation as $\f(\a)$.

Going back to \eqref{eqn:f}, we identify $f_{\Orb_{\G}(\X^\tpow)}$ with polynomial functions in $\Ring$, as follows.
There exists some $f \in \Ring$, such that $f_{\Orb_{\G}(\X^\tpow)}(\a^\opow) = \f(\a)$ for any $\w$-th tensor powers $\a^\opow$, \textit{i.e.}, if the domain $\L[\X^\tpow]$ of the former function is restricted to tensor powers, then  the the former function is essentially a polynomial function.
This polynomial $\f$ that corresponds to $f_{\Orb_{\G}(\X^\tpow)}$ must be \emph{homogenous}, \textit{i.e.}, all monomials of $\f$ must all be of degree $\w$.

\renewcommand{\fn}{\footnote{For simplicity we still focus on permutation actions, though the invariant theoretic results discussed here holds for matrix groups in general.}}

\newcommand{\fnt}{\footnote{For matrix groups, we have a more general formula based on Molien's Theorem~\cite{Cox},~p. 340.}}

By the above association of $\G$-invariants $f_{\Orb_{\G}(\X^\tpow)}$ and polynomials $\f$, such an $\f$ is a $\G$-invariant. 
We formalize the permutation action\fn~of $\G$ on the polynomial ring $\Ring$.
Allow $\G$ to permute the variates $\Z_i$'s by the identification between $\Real^n$ and $\LX$.
More specifically for any $g \in \G$, if $\LeftAct{\f}{g}$ denotes the polynomial after permuting the variates of $\f$, then for any evaluation under $\a \in \LX$ we have $(\LeftAct{\f}{g})(\a) = \f(\LeftAct{\a}{g})$. 
Hence if the polynomial $\f$ is a $\G$-invariant, then $\f$ must satisfy 
$\LeftAct{\f}{g} = \f$ for all $g \in \G$. 
Invariant theory is the study of the set $\Ring^\G$ of all $\G$-invariant polynomials, for some group $\G$.
This set is called an \textbf{invariant ring} (of $\G$).
Now with reference to the previously discussed polynomial ring $\Ring$, 
note that $\Ring^\G$ is a subring of $\Ring$, and that $\Ring^\G$ contains the constant polynomials. 
Also $\Ring^\G$ is said to be \emph{graded}, whereby each grade refers to the set of all $\G$-invariant homogeneous polynomials of a certain degree $\w \geq 0$, see~\cite{Cox}, p. 331. 
We refer to this set of degree-$\w$ homogeneous polynomials as the \textbf{$\w$-th component} of $\Ring^\G$. 
Clearly, each $\w$-th component is closed under $\Real$-linear combinations.
In fact, it is known that each such component can be 
\emph{generated} by 
$\kappa_\w$ polynomials $\f_1, \cdots, \f_{\kappa_\w}$, each $\f_i$ corresponding to the $i$-th orbit 
invariant $f_{\Orb_{\G,i}}$, recall \eqref{eqn:F}.
It now becomes clear how the $\G$-invariant $\F_\w$ corresponds to the $\w$-th component; each ``row'' of $\F_\w$ corresponds to a (polynomial) generator. 
The number $\kappa_\w$ of generators is computable\fnt~by the same equation \eqref{eqn:permC2}. 

\renewcommand{\fn}{\footnote{The statement in~\cite{Dufresne} uses a stronger notion of discriminability, called a \emph{geometric separating set}, see Definition 3.2.1, p. 15. Also it holds for general matrix groups.}}

At this point one realizes that Algorithm \ref{alg:1} in Subsection \ref{ssect:tensor} proposes to only use one $\w$-th component. 
Evaluating $\F_\w$ only requires polynomial complexity ($n^\w$ operations).
But what about the discriminability hypothesis? 
In the next Supplementary Material \ref{app:triple}, we explain the connection between each $\F_\w$ and the so-called \emph{multi-correlations} (related to pattern recognition). 
In particular for the special case $\w=3$, Kakarala has applied representation theoretic methods to obtain so-called \emph{completeness} results, or in other words a characterization of the discriminability hypothesis
under certain conditions.
On the other hand if one is willing to consider the entire invariant ring, the discriminability hypothesis is known to \emph{unconditionally} satisfy for \emph{any} subset in $\LX$. 
We cite the following result in Dufresne's thesis, stated here slightly differently\fn.

\begin{theorem}[Corollary 3.2.12,~\cite{Dufresne},~p. 26] \label{thm:Dufresne}
Let $\G$ be a finite group. 
Let $\X$ be a finite $\G$-space.   
Then all $\w$-th components of the corresponding invariant ring, for all $\w \leq \size{\G}$, 
will be discriminable over the whole data space $\LX$.
That is for any fundamental region $\Split{1}$, for any canonical points $\a_1, \a_2 \in \LX_\Split{1}$, $\a_1,\neq \a_2$, there exists some $\G$-invariant $\f$ in $\Ring^\G$ with degree at most $\size{\G}$, such that $\f(\a_1) \neq \f(\a_2)$. 
\end{theorem}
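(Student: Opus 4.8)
The plan is to reduce the claim to two classical facts from invariant theory of finite groups, namely (i) the invariant ring $\Ring^{\G}$ separates $\G$-orbits in $\Real^n$, and (ii) when $\G$ is finite, the Noether bound guarantees that $\Ring^{\G}$ is generated as an algebra by invariants of degree at most $\size{\G}$. Given a fundamental region $\Split{1}$ and distinct canonical points $\a_1,\a_2\in\LX_{\Split{1}}$, the first step is to observe that $\a_1$ and $\a_2$ lie in \emph{different} $\G$-orbits: by the defining properties of a fundamental region the translates $\{\LeftAct{\a}{g}\}$ of points in $\Split{1}$ are disjoint across $g\in\G$, so if $\a_1$ and $\a_2$ were in the same orbit we would have $\a_1 = \LeftAct{\a_2}{g}$ for some $g$, and since both lie in $\Split{1}$ this forces $g$ to fix $\a_2$, contradicting un-fixability (or, if one allows $g=e$, contradicting $\a_1\neq\a_2$). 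Hence $\a_1$ and $\a_2$ are orbit-inequivalent in $\LX$.

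Next I would invoke the orbit-separation property of the full invariant ring: for a finite group $\G$, given two points in distinct orbits, there is an invariant polynomial $f\in\Ring^{\G}$ with $f(\a_1)\neq f(\a_2)$. The standard construction is to take \emph{any} polynomial $p\in\Ring$ separating the two finite orbits (e.g.\ a polynomial vanishing on the orbit of $\a_2$ but not at $\a_1$, which exists since the orbit of $\a_2$ is a finite set not containing $\a_1$) and then average it over the group, $f = \frac{1}{\size{\G}}\sum_{g\in\G} \LeftAct{p}{g}$ (the Reynolds operator); one checks $f\in\Ring^{\G}$ and $f(\a_2)$ equals the average of $p$ over the orbit of $\a_2$, which is $0$, while $f(\a_1)$ is the average of $p$ over the orbit of $\a_1$ — and by a further small adjustment (scaling, or choosing $p$ to avoid cancellation, e.g.\ making $p$ nonnegative with $p(\a_1)>0$) one arranges $f(\a_1)\neq 0 = f(\a_2)$.

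The third step controls the degree. Decompose $f$ into its homogeneous components $f = \sum_{d\ge 0} f^{(d)}$; each $f^{(d)}$ is again $\G$-invariant since the $\G$-action preserves degree, and $f(\a_1)\neq f(\a_2)$ forces $f^{(d)}(\a_1)\neq f^{(d)}(\a_2)$ for at least one degree $d$. It remains to replace this $f^{(d)}$ (whose degree might be large) by an invariant of degree $\le\size{\G}$: here one applies the Noether degree bound, which says $\Ring^{\G}$ is generated as a $\Real$-algebra by its homogeneous invariants of degree $\le\size{\G}$. Writing $f^{(d)}$ as a polynomial in such generators $h_1,\dots,h_t$ (each $\deg h_i\le\size{\G}$), the values $f^{(d)}(\a_1)$ and $f^{(d)}(\a_2)$ are the \emph{same} polynomial expression evaluated at $(h_1(\a_1),\dots,h_t(\a_1))$ versus $(h_1(\a_2),\dots,h_t(\a_2))$; since these two value-vectors produce different outputs, some coordinate must differ, i.e.\ $h_i(\a_1)\neq h_i(\a_2)$ for some $i$ with $\deg h_i \le \size{\G}$. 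Taking $\f = h_i$ (which lies in the $\w$-th component for $\w = \deg h_i \le \size{\G}$, and hence is picked up by some row of $\F_{\w}$) completes the argument.

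The main obstacle is not any single step but locating and citing the right black boxes at the right strength: one needs (a) the fact that $\Ring^{\G}$ separates orbits for finite $\G$ — routine via the Reynolds operator in characteristic $0$ — and (b) the Noether bound $\beta(\G)\le\size{\G}$, which is the genuinely nontrivial ingredient and is exactly what pins the degree to $\size{\G}$. A secondary technical point is the bookkeeping that relates "discriminable over $\V$" in the sense of Section \ref{ssect:Whit} (one-to-one on $\V_{\Split{1}}$) to "separates canonical points," and the verification, sketched in the first paragraph, that distinct points of a single fundamental region always lie in distinct orbits; both are straightforward from the definitions but must be stated carefully so the conclusion about $\F_{\w}$ (rather than an abstract invariant) is justified.
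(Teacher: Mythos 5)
The paper does not actually prove this statement: it is imported (with slight rephrasing, as the footnote admits) as Corollary 3.2.12 of Dufresne's thesis, so there is no in-paper argument to compare yours against. Judged on its own, your reconstruction is correct and is the standard characteristic-zero argument. All three ingredients check out: (i) distinct points of a single fundamental region lie in distinct $\G$-orbits, by the disjointness of the translates $\{\LeftAct{\a}{g} : \a \in \Split{1}\}$ across $g\in\G$ together with un-fixability; (ii) applying the Reynolds operator to a polynomial that equals $1$ on the (finite) orbit of $\a_1$ and $0$ on the (finite, disjoint) orbit of $\a_2$ --- such a polynomial exists by interpolation --- produces an invariant separating the two points; (iii) the Noether bound $\beta(\G)\le\size{\G}$, valid over $\Real$, forces some homogeneous generator of degree at most $\size{\G}$ to separate them, since two points on which every generator agrees cannot be separated by any element of the invariant ring. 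Your closing remark that the separating generator is detected by $\F_\w$ for $\w$ equal to its degree is also justified, because the paper has already recorded that the degree-$\w$ orbit-sum invariants span the $\w$-th graded component. Two minor observations: the decomposition of $f$ into homogeneous components is dispensable (the generator argument applies to $f$ directly), and you are right that the Noether bound is the only genuinely nontrivial black box --- Dufresne's actual corollary is stronger (geometric separating sets, arbitrary matrix groups, and positive characteristic issues), but for the statement as used in this paper your elementary route via orbit separation plus the degree bound is sufficient.
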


Recall each $\F_\w$ corresponds to the $\w$-th component. 
Hence if all $\G$-invariants $\F_\w$, for all $\w \leq \size{\G}$, are appropriately made to form a single $\G$-invariant, then such a $\G$-invariant will be discriminable over any data set $\V$. 
This leads to the following important observation.

\begin{fact} \label{fact:1}
\textbf{The discriminability hypothesis can always be satisfied with large enough computational complexity:}
There exists a single $\G$-invariant corresponding to $\w$-components, $\w \leq \size{\G}$, that for any data set $\V \subset \LX$, satisfies the discriminablity hypothesis in both our Whitney embedding Theorem \ref{thm:Sauer} and Johnson-Lindenstrauss Theorem \ref{thm:JL}. 

This implies that any bounded, non-sequential data set $\V$ can be appropriately embedded with embedding dimension $m$ tied only to its relevant size $k$. 

However, such an invariant requires $\mathcal{O}(n^{\size{\G}})$ complexity to compute, exponential in the size of $\G$ - clearly infeasible in practice for most group sizes.
\end{fact}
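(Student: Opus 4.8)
The plan is to derive Fact~\ref{fact:1} as a corollary of Theorem~\ref{thm:Dufresne} together with Theorems~\ref{thm:Sauer} and~\ref{thm:JL}, by ``stacking'' the invariants $\F_1,\dots,\F_{\size{\G}}$ into one. Set $\mathcal{T} = \bigoplus_{\w=1}^{\size{\G}}\Rw$, a real vector space of dimension $\sum_{\w=1}^{\size{\G}} n^\w$, normed by $\Norm{(x_1,\dots,x_{\size{\G}})}{2}^2 = \sum_\w \Norm{x_\w}{2}^2$; let $\mu : \LX \to \mathcal{T}$ be the lift $\a \mapsto (\a, \a^{\otimes 2},\dots,\a^{\otimes \size{\G}})$, and let $\F = \bigoplus_{\w=1}^{\size{\G}}\F_\w$ act blockwise on $\mathcal{T}$. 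Then $\F$ is \emph{linear} on $\mathcal{T}$; it is a $\G$-invariant, since each $\F_\w$ is (Proposition~\ref{pro:GInv}, \eqref{eqn:F}); and because each $\F_\w$ has unit operator norm, $\Norm{\F(x)}{2}^2 = \sum_\w\Norm{\F_\w(x_\w)}{2}^2 \le \sum_\w\Norm{x_\w}{2}^2 = \Norm{x}{2}^2$, so $\F$ is again $1$-Lipschitz with unit operator norm. The map $\mu$ is polynomial, hence Lipschitz on bounded sets.

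First I would show $\F$ is discriminable over every $\V \subset \LX$ --- in fact over $\LX$ itself. Fix a fundamental region $\Split{1}$ and take distinct $\a_1, \a_2 \in \LX_{\Split{1}}$. By Theorem~\ref{thm:Dufresne} there is a $\G$-invariant polynomial $f \in \Ring^\G$ with $\deg f \le \size{\G}$ and $f(\a_1) \ne f(\a_2)$. Write $f = f_0 + f_1 + \dots + f_{\size{\G}}$ for its decomposition into homogeneous pieces; since the permutation action of $\G$ on $\Ring$ preserves degree, each $f_\w$ is itself $\G$-invariant, and $f(\a_1)\ne f(\a_2)$ forces $f_\w(\a_1) \ne f_\w(\a_2)$ for some $\w$ with $1 \le \w \le \size{\G}$. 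As recalled in Supplementary Material~\ref{sm:invar1}, the $\w$-th component of $\Ring^\G$ is spanned by the orbit-sum generators collected (up to normalization) in $\F_\w$, so $f_\w(\a_1) \ne f_\w(\a_2)$ gives $\F_\w(\a_1^\opow) \ne \F_\w(\a_2^\opow)$, hence $\F(\mu(\a_1)) \ne \F(\mu(\a_2))$. Thus $\F$ separates every pair of canonical points, that is, $\F$ is discriminable over every $\V$.

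Next I would feed the two-step map $\Phi\F\mu : \LX \to \Real^m$ through the existing proofs of Theorems~\ref{thm:Sauer} and~\ref{thm:JL}, substituting $(\mathcal{T},\mu,\F)$ for $(\Rw, \a^\opow, \F_\w)$ throughout. Inspection of those proofs shows the only properties used are: linearity of the first-stage map on its (now higher-dimensional) domain; $\G$-invariance; discriminability over $\V$; the Lipschitz behaviour of the pre-composed map (now $\mu$), so that the pair-difference set still has box-counting dimension at most $2k$; and, for Theorem~\ref{thm:JL}, the $1$-Lipschitz bound and the orthogonal-splitting identity $\Norm{\F(v)}{2}^2 + \Norm{A_{\F}(v)}{2}^2 = \Norm{v}{2}^2$ onto $\ker\F$, cf.~\eqref{eqn:orth}. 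All of these have just been verified, so both theorems apply and the embedding dimension $m$ they require is governed only by $k$ --- $\boxdim(\cl{\V_{\Split{1}}})$ in the Whitney case, $\size{\V_{\Split{1}}}$ in the Johnson--Lindenstrauss case --- independently of which invariant is used in step one, which is the second assertion of the Fact.

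Finally, $\dim\mathcal{T} = \sum_{\w=1}^{\size{\G}} n^\w = \mathcal{O}(n^{\size{\G}})$, so evaluating $\F\mu$ (equivalently, all orbit invariants of order at most $\size{\G}$) takes $\mathcal{O}(n^{\size{\G}})$ operations and the full map $\Phi\F\mu$ takes $\mathcal{O}(m\,n^{\size{\G}})$ --- exponential in $\size{\G}$, hence impractical once the group is large (for instance $\size{\G}=\ell!$ when $\G=\Sym{\ell}$). The only step needing real care is the discriminability reduction: turning Theorem~\ref{thm:Dufresne}'s ``some separating invariant of degree $\le \size{\G}$ exists'' into a statement about the \emph{particular finite family} $\{\F_\w\}_{\w\le\size{\G}}$, which relies on the degree-preserving decomposition of $\G$-invariant polynomials into invariant homogeneous components and on each such component being generated by the orbit sums that make up $\F_\w$; one should also confirm, as above, that the unit-operator-norm normalization is preserved under the direct-sum stacking, so that the stability hypothesis of Theorem~\ref{thm:JL} survives.
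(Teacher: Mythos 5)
Your proposal is correct and follows essentially the same route as the paper, which presents Fact \ref{fact:1} as an immediate consequence of Theorem \ref{thm:Dufresne}: stack the graded invariants $\F_\w$ for all $\w \leq \size{\G}$ into a single invariant, note that each graded component of the invariant ring is generated by the orbit sums making up $\F_\w$ so that Dufresne's separating invariant forces some $\F_\w$ to separate any pair of distinct canonical points, and count the $\mathcal{O}(n^{\size{\G}})$ evaluation cost. Your write-up merely makes explicit the homogeneous-decomposition step and the re-verification of the hypotheses of Theorems \ref{thm:Sauer} and \ref{thm:JL} for the stacked map, which the paper leaves implicit.
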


It is not yet known if the size requirements on $\w$ in Theorem \ref{thm:Dufresne} is necessary (in certain cases they can be improved). 
Now since the same theorem holds for all of $\LX$, one meaningful approach would be relax this requirement, and only consider \emph{specific} subsets of $\LX$.
Kakarala adopts a similar strategy for triple-correlations, by obtaining completeness results under certain assumed data conditions (see second set of supplementary material). 

\subsection{The set of canonical points includes a manifold structure:} \label{sm:invar2}

Another beautiful aspect of invariant theory, is due to its connection with \emph{algebraic geometry}.
In particular, there is a remarkable explanation how
the set of all canonical points has a manifold-like structure, in the form of an \emph{affine algebraic variety}~\cite{Cox}, pp. 345-353.
An (affine) algebraic variety is a set of points, whereby there exists a set of polynomial equations, for which is satisfied by every point in this set.
For example, the kernel of the $\G$-invariant $\F_\w$ in \eqref{eqn:F} is related to the following algebraic variety
\bea
\{\a \in \LX : \f_i(\a) = 0, 1\leq i \leq \kappa_\w\}, \label{eqn:kernel}
\eea
where $\f_i(\a) = f_{\Orb_{\G,i}(\X^\tpow)}(\a^\opow) $. 
For the same polynomials $\f_i$, the following set is also an algebraic variety
\bea
\{(\a_1,\a_2) \in \LX \times \LX : \f_i(\a_1) - \f_i(\a_2) = 0, 1\leq i \leq \kappa_\w\}, \label{eqn:kernel2}
\eea
whereby this second set \eqref{eqn:kernel2} contains pairs of points in $\LX$ that \emph{cannot} be discriminated by the $\G$-invariant $\F_\w$.
In theory, the set could be computed by \emph{elimination theory} using a \emph{Gr\"obner basis}, see~\cite{Cox}, ch. 3, which will obtaining useful characterizations of such pairs of points $(\a_1,\a_2)$. 
Though such an approach can be unwieldy for large $n$, it does suggest a a possible algebraic geometry view of characterizing discrimability of invariants, besides the representation theoretic techniques of Kakarala's.
Also Kakarala's techniques currently only hold for triple-correlations (\textit{i.e.}, $\w=3$), whereas here $\w$ could be arbitrary. 

\newcommand{\MapT}{\rho}
\newcommand{\Comp}{\mathbb{C}}
\newcommand{\CRing}{\Comp[Z_1,\cdots, Z_n]}
\newcommand{\CLX}{\Comp[\X]}
\newcommand{\p}{\beta}
\newcommand{\RingY}{\Comp[Y_1,\cdots, Y_\ell]}

The algebraic variety structure of the set of canonical points
is a little more complicated to explain, and requires the \emph{algebraic closure} of $\Real$ to the complex field $\Comp$.
Take a generating set of the invariant ring $\CRing^\G$ over $\Comp$, say $\f_1,\cdots, \f_\ell$ for some $\ell \geq 1$, and form a map $\MapT : \CLX \rightarrow \Comp^\ell : \a \mapsto (\f_1(\a), \cdots, \f_\ell(\a))$, where $\CLX$ is the \emph{complexification} of $\LX$.
Recall the notation $\CLX_\Split{1}$, which means a set a canonical points in $\CLX$ lying in some fundamental region $\Split{1}$.
There exists an invariant theoretic result that says that $\CLX_\Split{1}$ is in bijection with the \emph{image} of $\MapT$, 
whereby this image is actually an algebraic variety.
The set of polynomial equations that describe the image comes from the generators of a special \emph{ideal} of the ring $\RingY$ of $\ell$-variate polynomials, where $\ell$ is the number of generators $\f_i$ of the invariant ring. 
This ideal, known as the \textbf{ideal of relations}, contain all $\p$ in $\RingY$ whereby $\p(\f_1,\cdots,\f_\ell)$ is identically zero; here $\p(\f_1,\cdots,\f_\ell)$ is thought of as a polynomial in the variates $Z_i$'s.
This result is stated as follows.

\begin{theorem}[Theorem 10,~\cite{Cox}, p. 351] \label{thm:canonical}
Let $\f_1,\cdots, \f_\ell$ generate the invariant ring $\CRing^\G$, for some $\ell \geq 1$.
Let $\MapT : \CLX \rightarrow \Comp^\ell : \a \mapsto (\f_1(\a), \cdots, \f_\ell(\a))$.

Let $\p_1,\cdots, \p_r$ generate the ideal of relations in the ring $\RingY$, for some $r \geq 1$. Consider the algebraic variety 
\bea
\{(b_1,\cdots, b_r) \in \Comp^r : \p_i(b_1,\cdots, b_r)=0, 1 \leq i \leq r\} \label{eqn:image}
\eea
Then the image of $\MapT$ is surjective over the algebraic variety \eqref{eqn:image}. 
In fact if we restrict $\MapT$ over the domain $\CLX_\Split{1}$ for any fundamental region $\Split{1}$, then $\MapT$ with this restriction of domain, becomes bijective.
\end{theorem}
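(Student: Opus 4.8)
\emph{Proof idea.} The plan is to follow the standard orbit-space argument of invariant theory (essentially~\cite{Cox}, Ch.~7, \S4), keeping track of what restricting to a fundamental region does. Write $I\subseteq\RingY$ for the ideal of relations, i.e.\ the kernel of the $\Comp$-algebra surjection $\RingY\twoheadrightarrow\CRing^\G$ sending $Y_i\mapsto\f_i$; by hypothesis $I=(\p_1,\dots,\p_r)$, and since $\CRing^\G$ is a domain (a subring of $\CRing$), this $I$ is prime, so the variety \eqref{eqn:image} is exactly $\mathbf{V}(I)$ and has coordinate ring $\RingY/I\cong\CRing^\G$. The first, easy step is that $\MapT$ lands in $\mathbf{V}(I)$: for every $\a\in\CLX$ and every $j$, $\p_j(\f_1(\a),\dots,\f_\ell(\a))=\bigl(\p_j(\f_1,\dots,\f_\ell)\bigr)(\a)=0$ because $\p_j(\f_1,\dots,\f_\ell)$ is the zero polynomial.

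Next I would prove that $\MapT$ is surjective onto $\mathbf{V}(I)$. The crucial fact is that $\CRing$ is \emph{integral} over $\CRing^\G$: each coordinate $Z_i$ is a root of the monic polynomial $\prod_{g\in\G}\bigl(T-\LeftAct{Z_i}{g}\bigr)$, whose coefficients are symmetric in the $\G$-orbit of $Z_i$ and hence invariant. Since $\CRing$ is moreover finitely generated as a $\Comp$-algebra, it is a finite module over $\CRing^\G\cong\RingY/I$; a finite, injective ring extension induces a surjective morphism of the associated affine varieties (concretely, lying-over supplies, for each $b\in\mathbf{V}(I)$, a maximal ideal of $\CRing$ over the one corresponding to $b$, i.e.\ a point of $\CLX$ in $\MapT^{-1}(b)$). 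Hence $\MapT:\CLX\to\mathbf{V}(I)$ is onto. I expect this to be the main obstacle: what must be excluded is that the image is only a Zariski-dense constructible subset rather than all of $\mathbf{V}(I)$, and this is precisely where finiteness of $\G$ (equivalently, that $\MapT$ is a finite morphism) is used.

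It remains to compute the fibres and extract the bijectivity statement. I would show $\MapT(\a)=\MapT(\a')$ iff $\a'=\LeftAct{\a}{g}$ for some $g\in\G$. ``If'' is immediate from invariance of the $\f_i$. For ``only if'', if $\a,\a'$ lie in distinct $\G$-orbits, these orbits are disjoint finite (hence Zariski-closed) subsets of $\CLX$, so Lagrange interpolation gives $h\in\CRing$ with $h\equiv 1$ on the orbit of $\a$ and $h\equiv 0$ on that of $\a'$; its Reynolds average $\widetilde h:=\tfrac1{\size{\G}}\sum_{g\in\G}\LeftAct{h}{g}$ lies in $\CRing^\G=\Comp[\f_1,\dots,\f_\ell]$ and satisfies $\widetilde h(\a)=1\neq 0=\widetilde h(\a')$, forcing $\MapT(\a)\neq\MapT(\a')$. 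Finally, by the defining properties of a fundamental region in Subsection~\ref{ssect:Whit}, every orbit of un-fixable points meets $\CLX_\Split{1}$ in exactly one point; together with the fibre description this makes $\MapT|_{\CLX_\Split{1}}$ injective, hence a bijection onto its image.

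The one point I would only flag, not resolve, is the fixable locus: a fundamental region contains only un-fixable points, so strictly $\MapT|_{\CLX_\Split{1}}$ is a bijection onto its image $\MapT(\CLX_\Split{1})$, which is $\mathbf{V}(I)$ with the images of the (finitely many conjugacy types of) fixable orbits removed; hitting the full variety \eqref{eqn:image} requires either adjoining orbit representatives for those to $\Split{1}$, or simply noting that the discrepancy sits on a proper subvariety and is irrelevant for the discriminability statements of Section~\ref{sect:Embed}.
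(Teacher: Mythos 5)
Your proof is correct, and there is nothing internal to compare it against: the paper offers no proof of this statement, quoting it directly as Theorem 10 of~\cite{Cox}, p.~351. Your three steps --- the image lands in the variety of the ideal of relations because each $\beta_j(f_1,\dots,f_\ell)$ vanishes identically; surjectivity via Noether's integrality of $\mathbb{C}[Z_1,\dots,Z_n]$ over the invariant ring together with lying-over, which is exactly what rules out the image being merely a dense constructible set; and the fibre-equals-orbit computation via a Reynolds-averaged separating polynomial --- reproduce the standard argument of the cited source. The substantive contribution of your write-up is the caveat you flag at the end, and you are right to flag it: with the paper's definition of a fundamental region (Subsection on Whitney embedding), $\mathcal{R}$ contains only un-fixable points, so the restriction of $\rho$ to $\mathbb{C}[\mathcal{X}]_{\mathcal{R}}$ is injective but misses the images of every orbit with nontrivial stabilizer, and the claimed bijectivity onto the full variety \eqref{eqn:image} fails as literally stated (e.g.\ the image of the origin, which is fixed by every group element, is never attained). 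The source theorem concerns the full orbit space $\mathbb{C}^n/\mathcal{G}$ rather than a fundamental region in this paper's sense, which is where the discrepancy enters; either of your proposed repairs --- adjoining one representative per fixable orbit to $\mathcal{R}$, or noting that the exceptional set lies in a proper subvariety and is irrelevant to the discriminability statements of Section~\ref{sect:Embed} --- is adequate. (A further cosmetic slip in the statement, which you silently correct, is that the variety \eqref{eqn:image} should live in $\mathbb{C}^\ell$, indexed by the $\ell$ ring generators, not in $\mathbb{C}^r$.)
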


Theorem \ref{thm:canonical} remarkably shows how the set of canonical points, after passing through this map $\MapT$, has the manifold structure of the algebraic variety \eqref{eqn:image}.
This brings to mind the possibility of applying \emph{manifold learning} techniques to learn the canonical points. However until one derives an analogue of Theorem \ref{thm:canonical} for the reals, one needs to work in $\Comp$.

\section{[Supplementary Material] Completeness results for triple-correlation} \label{app:triple}

\newcommand{\gw}{{\mat{g}\bI{1:\w}}}
\newcommand{\gmw}{{\mat{g}\bI{1:\w-1}}}

\renewcommand{\t}{t}

\subsection{Multi-correlations are connected with invariant theory:} \label{ssect:triple}

Auto- and triple-correlation functions have been employed as invariants in pattern recognition~\cite{Kakarala,Kondor2008,Kondor2009}, though the presentation has always been disparate from invariant theory.
The first goal of this second set of supplementary material, is to provide unification.
We begin by clarifying how a generalization of such functions (that we call \emph{multi-correlations}) are one and the same to the graded components of the invariant ring (see previous Supplementary Material \ref{app:invar}).
Then next, for the sake of most readers not familiar with Kakarala's completeness results for the triple-correlation, we provide a primer in Subsection \ref{app:Kaka}).

For correlation functions studied pattern recognition, the group action is limited to \emph{transitive} permutation actions.
Recall the two examples given in Subsection \ref{ssect:hom}. 
For this special case, the $\G$-space $\X$ is referred to as a \textbf{homogeneous space}. 
To explain correlations, we require the following notion of $\G$ itself as a homogeneous space. 

\begin{Example}\textbf{[$\G$ as a homogeneous space]:} \label{eg:3}
For an abstract group $\G$, define a action of $\G$ on itself, 
where for any $\g \in \G$, we have the image $g(\sigma) = g \sigma$ for any $\sigma \in \G$, 
\textit{i.e.}, $\G$ acts on itself by  left multiplication.
This is a transitive action, so $\G$ (as a set) is a homogeneous space.
\end{Example}

The last example admits discussion of the vector space $\LG$; we consider $\G$ as the set $\X$. 
Let $\at$ denote an element in $\LG$, where $\ate_g$ denotes an indexed element of $\at$ for $g \in \G$.
For any $\at \in \LG$, the \textbf{multi-correlation} $\A{\w}_{\at}$ for some $\w \geq 1$, is given as
\begin{align}
\A{\w}_{\at}(\g_1,\cdots, \g_{\w-1}) &= \sum_{\sigma \in \G} \ate_{\sigma } \ate_{\sigma g_1}\cdots  \ate_{\sigma g_{\w-1}}, \label{eqn:MC}
\end{align}
where for $j$, $1\leq j < \w$ we have $g_j \in \G$. 
The cases $\w=2$ and $\w=3$ specialize respectively to the auto- and triple-correlations.
For any $\w \geq 1$, the function $\A{\w}_{\at}$ is a $\G$-invariant, \textit{i.e.}, for any $\alpha \in \G$, we have $\A{\w}_{\LeftAct{\at}{\alpha}} =\A{\w}_{\at} $; to verify this, simply evaluate \eqref{eqn:MC} with $\LeftAct{\at}{\alpha}$ and put
$(\LeftAct{\at}{\alpha})_\sigma = \ate_{\alpha^{-1}\sigma}$ for any $\sigma \in \G$.

While the (correlation) functions \eqref{eqn:MC} seem to be only defined for the space $\LG$, we can accommodate any $\G$-space $\X$, by \emph{extending} elements in $\LX$ to $\LG$.
Let $\xo$ denote an element in $\X$ that has been (arbitrarily) chosen and fixed.
Using this $\xo$ then for any $\a \in\LX$, the \textbf{extension} of $\a$, denoted $\bar{\mat{a}}$, satisfies
\begin{align}
   \bar{a}_g = a_{g(x_1)}, ~~\mbox{ for all } g \in \G. \label{eqn:bar_a}
\end{align}
The \textbf{stabilizer} of the fixed element $\xo$,
denoted $\S_{\xo}$, is the set of group elements in $\G$ that leave $\xo$ un-moved, \textit{i.e.}, $\S_{\xo} = \{g \in \G : g(\xo) = \xo \}$. Clearly $\S_{\xo}$ will be a subgroup of $\G$.
Since we do not discuss other stabilizer subgroups in the sequel, we will drop the subscript $\xo$ from $\S_{\xo}$ and simply write $\S$ throughout.
The relationship \eqref{eqn:bar_a} relates $\S$ to extensions of vectors in $\LX$, whereby
note that any extension $\bar{\mat{a}}$ 
is \emph{constant over left-cosets} of $\S$ in $\G$, \textit{i.e.}, for any $g \in \G$, we have $\bar{a}_{gs} = \bar{a}_g$ for any $s \in \S$.
Hence when considering homogeneous spaces $\X$
we only need to evaluate \eqref{eqn:MC} (for $\A{\w}_{\bar{\a}}$ where $\a \in \LX$) at points 
$\{(t_{i_1},\cdots, t_{i_{\w-1}}) : 1 \leq i_1,\cdots, i_{\w-1} \leq n \}$, 
where each $\t_j$ is a \textbf{left-coset representative}. 
There are at most $n^{\w-1}$ such points, where $n = \size{\X}$.
For the previously fixed $\xo$, enumerate the rest of the elements in $\X$ as $x_2,x_3,\cdots, x_n$, and fix $t_j$ to send $ x_1 $ to $x_j$ (possible only when $\G$ acts transitively on $\X$). 
Note $n = \size{\X} = \size{\G}/\size{\S}$.
To conclude, 
extensions allow us to synonymously discuss correlations for $\LG$, and $\LX$ for any homogeneous $\G$-space $\X$.

\newcommand{\xmw}{\mat{x}\bI{1:\w-1}}
\newcommand{\tpoW}[1]{{\times (#1)}}

We proceed to show how the multi-correlation \eqref{eqn:MC} for some $\w \geq 1$, is related to the $\w$-th component of the invariant ring. 
We do this by specifying the connection with $\G$-invariant $\F_\w$ in \eqref{eqn:F}, which was already established to ``generate'' the $\w$-th degree polynomials in the ring.
For any $\a \in \LX$, we calculate the multi-correlation $\A{\w}_{\bar{\a}}$ as follows 
\ifdcol
\begin{align}
\A{\w}_{\bar{\mat{a}}}&(t_{i_1},\cdots, t_{i_{\w-1}})  \nn
&= \sum_{\sigma \in \G} \bar{a}_\sigma \bar{a}_{\sigma t_{i_1}} \cdots \bar{a}_{\sigma t_{i_{\w-1}}} \nn
&\stackrel{(a)}{=} \sum_{j=1}^n \sum_{s \in \S} \bar{a}_{t_j s} \bar{a}_{t_j s t_{i_1}} \cdots \bar{a}_{t_j s t_{i_{\w-1}}} \nn
&\stackrel{(b)}{=} \sum_{j=1}^n \sum_{s \in \S} a_{x_j} a_{(t_j s t_{i_1})(x_1)} \cdots a_{(t_j s t_{i_{\w-1}})(x_1)} \nn 
&= \sum_{j=1}^n a_{x_j} \sum_{s \in \S}  a_{(t_j s)(x_{i_1})} \cdots a_{(t_j s)(x_{i_{\w-1}})}
\label{eqn:A2simp}
\end{align}
\else
\begin{align}
\A{\w}_{\bar{\mat{a}}}(t_{i_1},\cdots, t_{i_{\w-1}})  
&= \sum_{\sigma \in \G} \bar{a}_\sigma \bar{a}_{\sigma t_{i_1}} \cdots \bar{a}_{\sigma t_{i_{\w-1}}} \nn
&\stackrel{(a)}{=} \sum_{j=1}^n \sum_{s \in \S} \bar{a}_{t_j s} \bar{a}_{t_j s t_{i_1}} \cdots \bar{a}_{t_j s t_{i_{\w-1}}} \nn
&\stackrel{(b)}{=} \sum_{j=1}^n \sum_{s \in \S} a_{x_j} a_{(t_j s t_{i_1})(x_1)} \cdots a_{(t_j s t_{i_{\w-1}})(x_1)} \nn 
&= \sum_{j=1}^n a_{x_j} \sum_{s \in \S}  a_{(t_j s)(x_{i_1})} \cdots a_{(t_j s)(x_{i_{\w-1}})}
\label{eqn:A2simp}
\end{align}
\fi
where in $(a)$ we apply $\sigma = t_j s$ for some $t_j$, in $(b)$ we apply \eqref{eqn:bar_a} and $\bar{a}_{t_j s} = a_{(t_js)(x_1)} =a_{t_j(x_1)} =a_{x_j}$, and the last equality follows by definition $t_j(x_1) = x_j$.
We notice the following from the final expression \eqref{eqn:A2simp}. 
For each $j$, $1\leq j \leq n$, the second summation really runs over indexes over $\X^\tpoW{\w-1}$ in the set
$\{t_j (\xmw) : \xmw \in \Orb_{\S}(\X^\tpoW{\w-1}) \}$, where $\Orb_{\S}(\X^\tpoW{\w-1})$ is the $\S$-orbit (over $\X^\tpoW{\w-1}$) that contains $(x_{i_1},\cdots, x_{i_{\w-1}})$.
The LHS and RHS of \eqref{eqn:A2simp} are really determined by the indices $i_1, \cdots, i_{\w-1}$, for at most $n^{\w-1}$ such choices.

We notice the following connection between the final expression in \eqref{eqn:A2simp} and the $\G$-invariant as applied in Algorithm \ref{alg:1}.
First, there is a one-to-one correspondence between $\G$-orbits on $\X^\tpow$, and $\S$-orbits on $\X^\tpoW{\w-1}$.
This correspondence is obtained for $\Orb_{\G}(\X^\tpow)$, by identifying $\Orb_{\S}(\X^\tpoW{\w-1})$ with the subset $ \{\xmw : (\xmw, x_1) \in \Orb \}$ of $\X^\tpoW{\w-1}$.
Secondly 
for any $\G$-orbit $\Orb = \Orb_{\G}(\X^\tpow)$ on $\X^\tpow$, by the corresponding $\w$-array $\barr$ in \eqref{eqn:orb}, 
we can express (see \eqref{eqn:F})
\begin{align}
f_{\Orb}(\a^\opow) 
 &= \sum_{j =1}^n a_{x_j} \left( \sum_{(x\bI{1},\cdots, x\bI{\w-1}) \in \Orb'_j} a_{x\bI{1}}\cdots a_{x\bI{\w-1}} \right)
 \nonumber
\end{align}
where for each $j$, $1\leq j \leq n$, we have $\Orb'_j = \{\xmw : (\xmw, x_j) \in \Orb \}$.
Note that $\Orb_j'$ is simply an orbit of the subgroup $t_j \S t_j^{-1}$ that stabilizes $x_j$, whereby $\Orb'_j = \Orb_{\S}(\X^\tpoW{\w-1})$, the $\S$-orbit previously identified with the $\G$-orbit $\Orb$.
Recall from the proof of Proposition \ref{pro:GInv} that the ($t_j \S t_j^{-1}$)-orbit is simply the set 
$\{t_j(\xmw) : \xmw \in \X^\tpoW{\w-1} \}$. 
Finally, compare with \eqref{eqn:A2simp} by taking $(x_{i_1},\cdots, x_{i_{\w-1}}) \in \Orb_{\S}(\X^\tpoW{\w-1})$ (determined by the indices $i_1, \cdots, i_{\w-1}$), and conclude the following result.

\begin{proposition} \label{pro:multiEquiv}
Let $\a \in \LX$. 
Let $\Orb_{\S,1}(\X^\tpoW{\w-1}),\cdots, \Orb_{\S,\kappa_\w}(\X^\tpoW{\w-1})$ denote the $\kappa_\w$ number of $\S$-orbits on $\X^\tpoW{\w-1}$.
Then firstly for an extension $\bar{\a}$, the multi-correlation $\A{\w}_{\bar{\a}}$ has at most $\kappa_\w$ unique evaluations, found at the points $(t_{i_1},\cdots, t_{i_{\w-1}})$ corresponding to the representatives $(x_{i_1},\cdots, x_{i_{\w-1}})$ of the $\S$-orbits.

Secondly, the output $\F_\w(\a^\opow)$ of Algorithm \ref{alg:1} is 
equivalent to the multi-correlation $\A{\w}_{\bar{\a}}$ for the extension $\bar{\a}$, whereby evaluation at the point $(t_{i_1},\cdots, t_{i_{\w-1}})$ corresponding to $(x_{i_1},\cdots, x_{i_{\w-1}})$, is equal to the value of $f_{\Orb_{\G}(\X^\tpow)}(\a^\opow)$, see \eqref{eqn:F}, where the $\G$-orbit $\Orb_{\G}(\X^\tpow)$ corresponds to the $\S$-orbit that contains $(x_{i_1},\cdots, x_{i_{\w-1}})$.
\end{proposition}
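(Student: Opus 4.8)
The statement is obtained essentially by assembling the computation already displayed in \eqref{eqn:A2simp} with the $\G$-orbit/$\S$-orbit correspondence sketched immediately before the proposition, so I would organize it into three short stages. The first stage settles the ``at most $\kappa_\w$ distinct values'' claim. Since any extension $\bar{\a}$ is constant on left cosets of $\S$ (as noted right after \eqref{eqn:bar_a}), the value $\A{\w}_{\bar{\a}}(\g_1,\dots,\g_{\w-1})$ depends only on the cosets $\g_j\S$, hence only on their representatives $t_{i_1},\dots,t_{i_{\w-1}}$, and therefore only on the tuple $(x_{i_1},\dots,x_{i_{\w-1}})\in\X^\tpoW{\w-1}$. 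I would then apply the bijective substitution $\sigma=t_j s$ (with $1\leq j\leq n$ and $s\in\S$) to the defining sum \eqref{eqn:MC}, arriving at the last line of \eqref{eqn:A2simp}, and observe that its inner sum $\sum_{s\in\S} a_{(t_j s)(x_{i_1})}\cdots a_{(t_j s)(x_{i_{\w-1}})}$ is unchanged when $(x_{i_1},\dots,x_{i_{\w-1}})$ is replaced by any other point of its $\S$-orbit (replacing each $x_{i_l}$ by $s'(x_{i_l})$ merely reindexes the sum over $s$). Hence $\A{\w}_{\bar{\a}}$ is constant on each $\S$-orbit of $\X^\tpoW{\w-1}$, which yields at most $\kappa_\w$ distinct evaluations, attained at orbit representatives exactly as claimed.

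The second stage makes precise the correspondence between the $\kappa_\w$ $\G$-orbits on $\X^\tpow$ and the $\kappa_\w$ $\S$-orbits on $\X^\tpoW{\w-1}$. I would take the assignment $\Orb\mapsto\{\xmw : (\xmw,x_1)\in\Orb\}$, verify that it lands inside a single $\S$-orbit (this is where $\S$ fixing $x_1$ and the $\G$-invariance of $\Orb$ are used), and record the ``slicing'' identity $\Orb'_j:=\{\xmw : (\xmw,x_j)\in\Orb\}=\{t_j(\xmw) : \xmw\in\Orb_\S\}$, where $\Orb_\S$ denotes the matched $\S$-orbit. Proving the slicing identity uses transitivity of $\G$ on $\X$ (so that $x_j=t_j(x_1)$), and exhibits $\Orb'_j$ as an orbit of the conjugate stabilizer $t_j\S t_j^{-1}$, exactly as in the proof of Proposition \ref{pro:GInv}. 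Injectivity of the assignment follows since $\Orb$ is recovered as the $\G$-saturation of $\{(\xmw,x_1) : \xmw\in\Orb_\S\}$; comparing cardinalities ($\kappa_\w$ on each side), or a direct surjectivity argument again via transitivity, then upgrades it to a bijection.

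The third stage is the comparison. Expanding $f_{\Orb}(\a^\opow)=\sum_{\xw\in\Orb} a_{x\bI{1}}\cdots a_{x\bI{\w}}$ straight from \eqref{eqn:orb} and \eqref{eqn:f}, and grouping this sum by its last coordinate $x\bI{\w}=x_j$, gives (via the slicing identity) $f_{\Orb}(\a^\opow)=\sum_{j=1}^{n} a_{x_j}\sum_{\xmw\in\Orb'_j} a_{x\bI{1}}\cdots a_{x\bI{\w-1}}$. Matching this with the last line of \eqref{eqn:A2simp} term by term---after rewriting the sum over $s\in\S$ there as a sum over $\Orb'_j$ weighted by the constant $\S$-stabilizer of the representative, so that the two expressions agree up to that fixed scalar (which, together with the orbit-cardinality normalization built into \eqref{eqn:F}, is what ``equivalent'' absorbs)---identifies $\A{\w}_{\bar{\a}}$ evaluated at $(t_{i_1},\dots,t_{i_{\w-1}})$ with the coordinate $f_{\Orb}(\a^\opow)$ of $\F_\w(\a^\opow)$ indexed by the $\G$-orbit $\Orb$ corresponding, under the bijection of the second stage, to the $\S$-orbit containing $(x_{i_1},\dots,x_{i_{\w-1}})$. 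That is the asserted equivalence. The one genuinely fiddly point is the orbit bookkeeping in the second stage---confirming that the slices $\Orb'_j$ really are the $t_j\S t_j^{-1}$-orbits $\{t_j(\xmw) : \xmw\in\Orb_\S\}$, and that the pairing is a bijection onto the $\S$-orbits rather than merely a surjection---whereas everything else reduces to routine reindexing of finite sums.
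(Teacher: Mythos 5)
Your proposal is correct and follows essentially the same route as the paper: the substitution $\sigma = t_j s$ leading to \eqref{eqn:A2simp}, the correspondence between $\G$-orbits on $\X^\tpow$ and stabilizer-orbits on $\X^\tpoW{\w-1}$, and the term-by-term comparison with $f_{\Orb}(\a^\opow)$. You are in fact slightly more careful than the paper at one point: the inner sum over the stabilizer in \eqref{eqn:A2simp} counts each orbit element with multiplicity equal to the size of the point stabilizer of $(x_{i_1},\cdots,x_{i_{\w-1}})$, a scalar the paper's derivation silently elides but which you correctly identify as what the word ``equivalent'' must absorb.
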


The second part of Proposition \ref{pro:multiEquiv} proves the intended equivalence between the  $\G$-invariants in \ref{eqn:F} and the multi-correlations. 
This proposition establishes a connection between Kakarala's representation theoretic analysis, discussed in the sequel, and the invariant theory discussed in Supplementary Material \ref{app:invar}.

\subsection{Kakarala's completeness results for triple-correlation:} \label{app:Kaka}
This subsection provides a brief introduction to \emph{representation theoretic} techniques for showing completeness of the triple correlation. 
We discuss a constructive algorithm for finite cyclic groups (which more generally also applies to finite abelian groups), and Kakarala's completeness result for compact groups.
Note that compact groups include finite groups under the discrete topology.
Good references to this material include the textbook~\cite{Tullio2008}, and Kakarala's and Kondor's theses~\cite{Kakarala,Kondor2008}.

Here we let $\V$ denote a finite-dimensional vector space. 
A \textbf{representation} of a group $\G$ over $\V$, is an action of $\G$ on the vector space $\V$; 
for any $\g \in \G$, each $\at \in \V$ is sent to $\rep(\g)\at$, whereby any $\rep(\g)$ is an invertible linear map.
For example suppose $\V = \LG$, and for $g \in \G$ set $\rep(\g)$ to be a 0-1 matrix in $\Real^{\G \times \G}$
whose $h,\sigma$-th element $(\rep(\g))_{h,\sigma}$ equals 1 i.f.f. $ h = g \sigma$.
This representation, called the \textbf{left-regular representation}, is in fact related to the previous example of $\G$ acting on itself (\textit{i.e.}, $\G$ is a homogeneous $\G$-space).

\newcommand{\dual}{\widehat{\G}}

A representation $(\rep,\V)$ is said to be \textbf{irreducible}, if the subspace of $\V$ invariant under the representation action, is trivial (\textit{i.e.}, the invariant subspace equals either $\{\mat{0}\}$ or $\V$).
An \textbf{unitary} representation $(\rep,\V)$ preserves the inner product on $\V$, \textit{i.e.}, for all $g\in\G$ we have $\brak{\rep(g)\at, \rep(g) \at'} = \brak{\at,\at'}$ for any $\at,\at'\in \V$.
Two representations $(\rep_1,\V_1)$ and $(\rep_2,\V_2)$ are said to be \textbf{equivalent}, if there exists a linear bijection $A : \V_2 \rightarrow \V_1 $ such that $\rep_1(g) A = A \rep_2$ for all $g\in\G$.
The \textbf{dual} of a finite group $\G$, denoted $\dual$, is the complete set of irreducible pairwise non-equivalent (unitary)
representations of $\G$. 
If $\G$ is finite then so is $\dual$.
The machinery to obtain $\dual$, from the left-regular representation, is given by the \emph{Peter-Weyl theorem} (see~\cite{Tullio2008}, pp. 85-86, for the statement for finite $\G$).
The following is the analogue of the Fourier transform, stated for finite $\G$.

\begin{Definition}[\textit{c.f.}, \cite{Tullio2008}, p. 99] \label{defn:FT}
Let $\at \in \LG$. Let $\G$ be a finite group with finite dual $\dual$.
The (abstract) \textbf{Fourier transform} component of $\at$ with respect to a irreducible (unitary) representation $(\rep,\V)$, is the linear operator $\hat{\at}(\rep) : \V \rightarrow \V$ defined by 
\bea
\hat{\at}(\rep) = \sum_{g\in \G} z_g \cdot \rep(\g). \label{eqn:FT}
\eea
\end{Definition}

\renewcommand{\fn}{\footnote{The $\mathcal{B}$ stands for \emph{bi-spectrum}, a term for the (2-dimensional) Fourier transform of the triple correlation.}}
\newcommand{\ddual}{\widehat{\G \times \G}}
\newcommand{\B}{\mathcal{B}_{\at}}
\newcommand{\mult}{m}
The techniques here will be very related to this Fourier transform.
In what follows, we need to consider the product group $\G \times \G$, and its dual $\ddual$.
Here,
each $(\rep,\V) \in \ddual$ has maps $\rep(g,h)$ indexed by an element pair $g,h \in \G$.
For the the triple correlation $\A{3}_{\at}$ of any $\at \in \LG$, we now elucidate an illuminating structure of a Fourier transform component, specially\fn~denoted $\B(\rho)$.
Consider two elements $\at_1,\at_2 \in \L[\G \times \G]$ related to $\at \in \LG$, as follows.
For $\at_1$, set $(\at_1)_{(g,g)} = z_g$ for all $g\in \G$ and $(\at_1)_{(g,h)} = 0$ when $h \neq g$.
For $\at_2$, set $(\at_2)_{(g,h)} = z_g z_h$ for all $g,h\in \G$.
Let $\dagger$ denote \emph{complex conjugation}.
Then for any $(\rep,\V) \in \ddual$, we see that
\ifdcol
\begin{align}
 & \left( \widehat{\at_1}(\rho) \right)^\dagger \widehat{\at_2}(\rho) \label{eqn:FT_AC} \\
 &=  \left( \sum_{\sigma \in \G} z_\sigma \cdot \rep(\sigma^{-1},\sigma^{-1}) \right) \cdot \left( \sum_{g,h \in \G} z_g z_h \cdot \rep(g,h) \right)  \nn
 &=  \sum_{h,g \in \G} \sum_{\sigma \in \G} z_\sigma z_g z_h \cdot \rep(\sigma^{-1}g,\sigma^{-1}h) \nn
 &=  \sum_{h,g \in \G} \sum_{\sigma \in \G} z_\sigma z_{\sigma g}  z_{ \sigma h} \cdot \rep(g,h) \nn
  &=\sum_{g,h\in \G} \A{3}_{\at} (g,h) \cdot \rep(g,h) = \B(\rho),  \nonumber
\end{align}
\else
\begin{align}
  \left( \widehat{\at_1}(\rho) \right)^\dagger \widehat{\at_2}(\rho) 
 &=  \left( \sum_{\sigma \in \G} z_\sigma \cdot \rep(\sigma^{-1},\sigma^{-1}) \right) \cdot \left( \sum_{g,h \in \G} z_g z_h \cdot \rep(g,h) \right)  \nn
 &=  \sum_{h,g \in \G} \sum_{\sigma \in \G} z_\sigma z_g z_h \cdot \rep(\sigma^{-1}g,\sigma^{-1}h) \nn
 &=  \sum_{h,g \in \G} \sum_{\sigma \in \G} z_\sigma z_{\sigma g}  z_{ \sigma h} \cdot \rep(g,h) \nn
  &=\sum_{g,h\in \G} \A{3}_{\at} (g,h) \cdot \rep(g,h) = \B(\rho),  \label{eqn:FT_AC}
\end{align}
\fi
where the second last equality follows from the definition \eqref{eqn:MC} of the triple correlation $\A{3}_{\at}$. 

\renewcommand{\fn}{\footnote{The direct sum $\V_1 \oplus \V_2$ of vector spaces equals 
$\{\mat{v}_1 + \mat{v}_2 : \mat{v}_1 \in \V_1, \mat{v}_2 \in \V_2\}$).}}

We proceed to further manipulate the LHS of \eqref{eqn:FT_AC}.
Each $(\rep, \V)$ in $\ddual$ can be expressed as 
$(\rep_1 \otimes \rep_2, \V_1 \otimes \V_2)$, where $(\rep_1,\V_1), (\rep_2,\V_2)\in\dual$, where $\rho(g,h) = \rep_1(g) \otimes \rep_2(h)$, see~\cite{Tullio2008}, p. 272.
Thus for $\widehat{\at_2}(\rho)$ in \eqref{eqn:FT_AC}, $\rho = \rho_1 \otimes \rho_2$, we conclude
\bea
\widehat{\at_2}(\rep_1\otimes\rep_2) &=& \hat{\at}(\rep_1) \otimes \hat{\at}(\rep_2), \label{eqn:z2}
\eea
where the RHS are two Fourier transforms of $\at$ in $\LX$, corresponding to representations 
$(\rep_1,\V_1), (\rep_2,\V_2)\in\dual$.
Next we require the notion\fn~of a \emph{direct sum representation} $(\rep_1 \oplus \rep_2,\V_1 \oplus \V_2)$ of two representations $(\rep_1, \V_1) $ and $(\rep_2, \V_2) $ of $\G$,
where $\V_1,\V_2$ are orthogonal.
In the direct sum for all $\g \in \G$., we mean that $\varrho_1(g)$ leaves $\V_2$ invariant, and $\varrho_2(g)$ leaves $\V_1$ invariant.
The tensor product representation $\rep_1 \otimes \rep_2$ can be expressed as direct sums of representations in $\dual$, \textit{i.e.}, 
\bea
\rep_1 \otimes \rep_2 
\equiv \bigoplus_{\varrho \in \dual} \varrho^{\otimes \mult_{\rep_1,\rep_2}(\varrho) } \label{eqn:TP_decomp}
\eea
where $\equiv$ denotes equivalence in representations (under some linear operator $A_{\rep_1,\rep_2} : \V \rightarrow \V'$ where $\V'$ is some subspace of $\LX$), 
and the notation $\varrho^{\otimes \ell }$ for $\varrho \in \dual$, $\ell \in \Integers$, means the representation $\varrho \otimes \cdots \otimes \varrho$ formed by $\ell$ copies of $\varrho$,
and finally $\mult_{\rep_1,\rep_2} : \dual \rightarrow \Integers$ returns for each $\varrho$ in $\dual$, the number of copies in the tensor product. 
From \eqref{eqn:TP_decomp} we can conclude for $\widehat{\at_1}(\rho)$ in \eqref{eqn:FT_AC}, where $\rho = \rho_1 \otimes \rho_2$,
\bea
\widehat{\at_1}(\rep_1\otimes\rep_2)  \label{eqn:z1}
&\equiv& \bigoplus_{\varrho \in \dual} \left( \hat{\at}(\varrho) \right)^{\otimes \mult_{\rep_1,\rep_2}(\varrho) }
\eea
where $\equiv$ means the same equivalence earlier in \eqref{eqn:TP_decomp}.
By the identity $\B(\rep) = \left( \widehat{\at_1}(\rho) \right)^\dagger \widehat{\at_2}(\rho)$
developed in \eqref{eqn:FT_AC}, we conclude where $\rep = \rep_1 \otimes \rep_2$ the following
\ifdcol
\begin{align}
&\B(\rep_1\otimes\rep_2) A_{\rep_1\otimes \rep_2} \label{eqn:Kaka} \\
 &=
\left[\bigoplus_{\varrho \in \dual} \left( \hat{\at}(\varrho) \right)^{\otimes \mult_{\rep_1,\rep_2}(\varrho) } \right]^\dagger
A_{\rep_1\otimes \rep_2}
\hat{\at}(\rep_1)\otimes \hat{\at}(\rep_2) \nonumber.
\end{align}
\else
\begin{align}
\B(\rep_1\otimes\rep_2) A_{\rep_1\otimes \rep_2} 
 &=
\left[\bigoplus_{\varrho \in \dual} \left( \hat{\at}(\varrho) \right)^{\otimes \mult_{\rep_1,\rep_2}(\varrho) } \right]^\dagger
A_{\rep_1\otimes \rep_2}
\hat{\at}(\rep_1)\otimes \hat{\at}(\rep_2). \label{eqn:Kaka} 
\end{align}
\fi
where $A_{\rep_1\otimes \rep_2}$ makes the equivalence \eqref{eqn:TP_decomp}.
From \eqref{eqn:Kaka}, we can now describe an algorithm that recovers the Fourier coefficients $\hat{\at}(\rho)$ from that of the triple-correlation $\A{3}_{\at}$ (\textit{i.e.}, from $\B(\rep_1\otimes\rep_2)$).
Then by a \emph{Fourier inversion theorem},~\cite{Tullio2008}, p. 100, we contain obtain from $\hat{\at}(\rho)$ the data $\at$.

\newcommand{\1}{\mathbb{1}}
\renewcommand{\fn}{\footnote{This steps of this algorithm was not stated as clearly in previous work, hence it is valuable to record them here.}}

A condition will be required for the algorithm to work: 
\begin{align}
\mbox{for all}~(\rep,\V) \in \dual,~~~\hat{\at}(\rep) \mbox{ is an \emph{invertible} map}. \label{eqn:cond_algo}
\end{align}
If \eqref{eqn:cond_algo} holds, then for all $\rep_1,\rep_2 \in \dual$ the following quantity 
\begin{align}
&\B'(\rep_1\otimes\rep_2)  = \B(\rep_1\otimes\rep_2) A_{\rep_1\otimes \rep_2} \hat{\at}^{-1}(\rep_1)\otimes \hat{\at}^{-1}(\rep_2) A_{\rep_1\otimes \rep_2}^\dagger
 \label{eqn:kaka2}
\end{align}
is well-defined, where $A_{\rep_1\otimes \rep_2}^\dagger$ is the adjoint of $A_{\rep_1\otimes \rep_2}$ with complex conjugation.
Let $(\1,\V)$ denote the 
\textbf{trivial representation}
whereby $\1(g) = 1$ for all $g\in\G$. 
We see that
\bea
 \hat{\at_1}(\1 \otimes \rep) &= & \hat{\at}(\rep), \label{eqn:triv1}\\
 \hat{\at_2}(\1 \otimes \rep) &= & \hat{\at}(\1) \cdot \hat{\at}(\rep), \nonumber
\eea
which follows from  \eqref{eqn:z1} and \eqref{eqn:z2}.
Then from \eqref{eqn:Kaka} the following algorithm\fn, under the existence of an appropriate labeling $\varrho_1,\varrho_2,\varrho_3, \cdots$ of representations in $\dual$ (where $\varrho_1 = \1$), will perform the promised task.

\begin{algorithm}{\textbf{To obtain Fourier coefficients $\hat{\at}(\rho)$ from $\B(\rho_1\otimes \rho_2)$, where $\rho,\rho_1,\rho_2\in\dual$}} \label{alg:2}

\bitm 
\item As $\B(\1\otimes\1) = \left( \hat{\at}(\1) \right)^3$ holds from \eqref{eqn:FT_AC} and \eqref{eqn:triv1}, compute $\hat{\at}(\1)= \hat{\at}(\varrho_1)$.
\item As $\B(\1\otimes\varrho_2) = \hat{\at}(\1) \cdot \left(\widehat{\at}(\varrho_2)\right)^\dagger \hat{\at}(\varrho_2)$ holds from \eqref{eqn:FT_AC} and \eqref{eqn:triv1}, compute $\hat{\at}(\varrho_2)$. 
\bitm
\item Note that since $\hat{\LeftAct{\at}{\alpha}}(\varrho_2) = \rho_2(\alpha) \hat{\at}(\varrho_2)$ for any $\alpha \in \G$, we can only determine $\hat{\at}(\varrho_2)$ up to $\G$-invariance (\textit{i.e.}, if $\hat{\at}(\varrho_2)$ solves the above expression, then so does $\rho_2(\alpha) \hat{\at}(\varrho_2)$ for any  $\alpha \in \G$).
\eitm
\item For $\varrho_3, \varrho_4, \cdots$, use the following iteration derived from both \eqref{eqn:Kaka} and \eqref{eqn:kaka2}. For $\ell \geq 3$, use
\[
\B'(\varrho_{\ell-1}\otimes\varrho_2) = \hat{\at}(\varrho_\ell)^\dagger \oplus  M_{\ell-1}
\]
to solve for $\hat{\at}(\varrho_\ell)$ where the LHS will be known using previous computations. 
where $M_{\ell-1}$ is the remainder term in the RHS of \eqref{eqn:TP_decomp} for $\varrho_{\ell-1}\otimes\varrho_2$, after pulling out one copy of $\varrho_\ell$. 
\eitm
\end{algorithm}

Now for the final step of Algorithm \ref{alg:2} to work, 
the labeling $\varrho_1,\varrho_2,\varrho_3, \cdots$ must allow $\hat{\at}(\varrho_\ell)$ to be pulled out in each $\ell$-th step. Unfortunately in general for finite groups $\G$, this labeling is unknown. On the other hand if $\G$ is cyclic, the representations $(\varrho_\ell,\V)\in \dual$, $1 \leq \ell \leq \size{\G}$, possess a ``cyclic group structure'', see~\cite{Tullio2008}, p. 274. In particular, there exists some choice for labeling $\varrho_1,\varrho_2,\varrho_3, \cdots$, such that we can express for any $2 \leq \ell \leq \size{\G}$
\[
\varrho_{\ell} = \varrho_{\ell-1} \otimes  \varrho_2 
\]
using some special choice for $\varrho_2$.
Hence for finite cyclic groups, Algorithm \ref{alg:2} will work as long condition \eqref{eqn:cond_algo} is met.
Also for finite abelian groups in general, which are always isomorphic to direct product of a finite number of finite cyclic groups, appropriate extensions can be perused.
In conclusion, Algorithm \ref{alg:2} is a constructive proof of a completeness result (under the above appropriate conditions), that $\A{3}_{\at} = \A{3}_{\at'} $ if and only if $\at'$ must be some obtainable from $\at$ by some $g\in\G$.

Using the condition \eqref{eqn:cond_algo}, Kakarala proved a remarkable completeness result of the same vein, for the large class of compact groups (which also includes some infinite groups - under appropriate generalization of the vector space $\LG$, the Fourier transform in Definition \ref{defn:FT}, and the dual $\dual$, see~\cite{Kakarala} for details).

\begin{theorem}[\textit{c.f.},~\cite{Kakarala}] \label{thm:Kaka}
Let $\G$ be a compact group, and let $\dual$ be its dual.
Let $\at$ be any arbitrary function in $\LG$, for which we assume that condition \eqref{eqn:cond_algo} is met.
Then the triple-correlation $\A{3}_{\at}$ of $\at$, equals another $\A{3}_{\at'}$ for some $\at' \in \LG$, if and only if there 
$\at' = \LeftAct{\at}{g}$ for some $g$ in $\G$.
\end{theorem}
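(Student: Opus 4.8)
The plan is to argue entirely on the Fourier side, reusing the machinery already assembled in \eqref{eqn:FT_AC}--\eqref{eqn:triv1}. The ``if'' direction is free: the $\G$-invariance $\A{3}_{\LeftAct{\at}{\alpha}} = \A{3}_{\at}$ was checked right after \eqref{eqn:MC}. For the converse, I would first reduce to the bispectrum. By Fourier inversion for compact groups~\cite{Tullio2008,Kakarala}, the Fourier transform of Definition \ref{defn:FT} (in its compact-group form) is injective, so it suffices to exhibit a single $g\in\G$ with $\hat{\at'}(\rep) = \rep(g)\hat{\at}(\rep)$ for every $(\rep,\V)\in\dual$, since the right-hand side is $\widehat{\LeftAct{\at}{g}}(\rep)$, whence inversion gives $\at' = \LeftAct{\at}{g}$. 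The hypothesis $\A{3}_{\at} = \A{3}_{\at'}$ passes at once (the bispectrum is a linear functional of the triple-correlation, see \eqref{eqn:FT_AC}) to $\mathcal{B}_{\at}(\rep) = \mathcal{B}_{\at'}(\rep)$ for all $(\rep,\V)\in\ddual$.

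Next I would extract the power spectrum and the residual ``phase'' ambiguity, exactly as the first two bullets of Algorithm \ref{alg:2} suggest. Taking $\rep = \1\otimes\1$ in \eqref{eqn:FT_AC} and \eqref{eqn:triv1} gives $\mathcal{B}_{\at}(\1\otimes\1) = (\hat{\at}(\1))^{3}$; since $\hat{\at}(\1)$ and $\hat{\at'}(\1)$ are real scalars, nonzero by \eqref{eqn:cond_algo}, the real cube root forces $\hat{\at}(\1) = \hat{\at'}(\1)$. Taking $\rep = \1\otimes\rep_2$ and cancelling this nonzero scalar yields $(\hat{\at}(\rep))^\dagger\hat{\at}(\rep) = (\hat{\at'}(\rep))^\dagger\hat{\at'}(\rep)$ for every $(\rep,\V)\in\dual$. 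The left side is positive definite by \eqref{eqn:cond_algo}, hence $\hat{\at'}(\rep)$ is invertible too, and a one-line computation shows $U_\rep := \hat{\at'}(\rep)(\hat{\at}(\rep))^{-1}$ is unitary on $\V$, with $U_\1 = 1$. The problem is now to show the family $(U_\rep)_{\rep\in\dual}$ is \emph{coherent}, i.e. $U_\rep = \rep(g)$ for one fixed $g\in\G$.

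For the coherence step I would feed $\hat{\at'}(\rep) = U_\rep\hat{\at}(\rep)$ back into the full bispectrum identity \eqref{eqn:Kaka} at arbitrary $\rep_1,\rep_2\in\dual$, using $\widehat{\at_2}(\rep_1\otimes\rep_2) = \hat{\at}(\rep_1)\otimes\hat{\at}(\rep_2)$ from \eqref{eqn:z2} and the isotypic decomposition \eqref{eqn:z1} via \eqref{eqn:TP_decomp}. Since $\mathcal{B}_{\at} = \mathcal{B}_{\at'}$ and every $\hat{\at}(\varrho)$ is invertible, the factors $\widehat{\at_1}$ and $\widehat{\at_2}$ cancel from the identity, leaving precisely: under the same intertwiner $A_{\rep_1,\rep_2}$ realizing \eqref{eqn:TP_decomp}, one has $U_{\rep_1}\otimes U_{\rep_2} \equiv \bigoplus_{\varrho\in\dual} U_\varrho^{\,\oplus \mult_{\rep_1,\rep_2}(\varrho)}$. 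Together with $U_\1 = 1$, this says exactly that $\rep\mapsto U_\rep$ is a tensor-compatible system of unitaries on $\mathrm{Rep}(\G)$, so Tannaka--Krein duality for compact groups identifies it with a point of $\G$: there is a unique $g\in\G$ with $U_\rep = \rep(g)$ for all $(\rep,\V)\in\dual$. For finite cyclic (or abelian) $\G$ this $g$ is instead produced by the explicit recursion of Algorithm \ref{alg:2}, using the ordering $\varrho_\ell = \varrho_{\ell-1}\otimes\varrho_2$ of $\dual$. Composing, $\hat{\at'}(\rep) = \rep(g)\hat{\at}(\rep) = \widehat{\LeftAct{\at}{g}}(\rep)$ for every $\rep$, and Fourier inversion closes the argument.

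I expect the coherence step to be the main obstacle. Pulling the pointwise unitaries $U_\rep$ out of pairwise power-spectrum data is routine, but assembling them into a single group element is not: for a general compact (or even general finite, non-abelian) $\G$ there is no canonical ordering of $\dual$ making the Algorithm \ref{alg:2} recursion terminate, so one must invoke the abstract duality theory (Peter--Weyl together with Tannaka--Krein), and the bookkeeping of the multiplicities $\mult_{\rep_1,\rep_2}$ and intertwiners $A_{\rep_1,\rep_2}$ in \eqref{eqn:TP_decomp}--\eqref{eqn:Kaka} is where the real work lies. A secondary technical point is carrying all of \eqref{eqn:FT_AC}--\eqref{eqn:triv1} from finite to compact $\G$ (sums replaced by Haar integrals, finite direct sums over $\dual$ by Hilbert-space direct sums), which follows Kakarala~\cite{Kakarala} but should be recorded carefully.
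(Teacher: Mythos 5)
Your outline is essentially the argument the paper attributes to Kakarala: the paper does not prove Theorem \ref{thm:Kaka} itself but cites it from~\cite{Kakarala}, remarking only that the proof is non-constructive and hinges on Tannaka--Krein duality — precisely the coherence step (assembling the pointwise unitaries $U_\rep$ into a single $\rep(g)$) that you correctly identify as the crux, with the explicit recursion of Algorithm \ref{alg:2} available only for cyclic/abelian $\G$. The one detail worth recording if you flesh this out is that Tannaka--Krein requires the family $(U_\rep)$ to commute with \emph{all} intertwiners (naturality), not just the particular $A_{\rep_1,\rep_2}$ realizing \eqref{eqn:TP_decomp}, and that the cube-root step uses the realness of the coefficients of $\at\in\LG$.
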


Unfortunately Kakarala's proof is non-constructive, and we still do not know how to run Algorithm \ref{alg:2} for general groups (but see~\cite{Kakarala} for an algorithm that works for the group of all $2 \times 2$ unitary matrices with determinant $+1$). 
The proof of Theorem \ref{thm:Kaka} relies on Tannaka-Krein duality (Proposition 1,~\cite{Chevalley}, p. 199).

Note the following important points.
Note Theorem \ref{thm:Kaka} only requires condition \eqref{eqn:cond_algo} (\textit{i.e.}, does not require the labeling $\varrho_1,\varrho_2,\varrho_3, \cdots$), whereby one seems to be able to satisfy it by slight perturbation of $\a$.
This is mis-leading, as Kondor pointed out~\cite{Kondor2008}, pp. 89-90, for extensions as in  \eqref{eqn:bar_a}, \textit{i.e.}, for $\at = \bar{\a}$ for $\a \in \LX$ of general homogeneous $\G$-spaces $\X$, the condition \eqref{eqn:cond_algo} turns out be mostly unsatisfied. 
While Kakarala has yet another remarkable completeness result for homogeneous spaces (see~\cite{Kakarala}, Theorems 4.6 \& 4.7), however as Kondor also pointed out (p. 91), this result applies only for elements in $\LG$ that are constant under \emph{right cosets} of $\S$ (or invariant under left $\S$-translation as in~\cite{Kakarala}), as opposed to our definition \eqref{eqn:bar_a} which makes extensions constant over \emph{left cosets} of $\S$.
Hence Kakarala's result does not apply exactly to our setup.

In conclusion, there exists some powerful results (\textit{e.g.}, Algorithm \ref{alg:2} and Theorem \ref{thm:Kaka}) developed for the triple-correlation.
However for general groups, there is room to improve these results, especially worthwhile would be a completeness result for homogeneous spaces for extensions as defined in \eqref{eqn:bar_a}.

\end{document}